\newcommand{\ComplexityFont}[1]{%
{\ensuremath{\complexity@possiblymakesmaller{\complexity@fontcommand{#1}}}}
}
\renewcommand{\ComplexityFont}[1]{{\ensuremath{\mathsf{#1}}}}
\newtheorem{theorem}{Theorem}
\newtheorem{lemma}{Lemma}
\newtheorem{definition}{Definition}
\newtheorem{problem}{Problem}
\providecommand{\cref}[1]{Chapter~\ref{chap:#1}}
\providecommand{\R}{\ensuremath{\mathbb{R}}}
\providecommand{\C}{\ensuremath{\mathbb{C}}}
\providecommand{\MSE}[1]{\operatorname{MSE}\left(#1\right)}
\providecommand{\set}[1]{\left\{#1\right\}}
\renewcommand{\vec}[1]{\ensuremath{\boldsymbol{#1}}}
\providecommand{\mat}[1]{\ensuremath{\boldsymbol{#1}}}
\providecommand{\wh}[1]{\ensuremath{\widehat{#1}}}
\providecommand{\calS}{\mathcal{S}}
\providecommand{\calL}{\mathcal{L}}
\providecommand{\calX}{\mathcal{X}}
 \providecommand{\mT}{\mat{T}}
\providecommand{\mPsi}{{\mat{\Psi}}}
\providecommand{\mpsi}{{\mat{\psi}}}
 \providecommand{\vf}{\vec{f}}
\newcommand{\figdir}{./figures}
\begin{document}

\title{Near-Optimal Sensor Placement \\for Linear Inverse Problems}

\author{Juri Ranieri$^*$,~\IEEEmembership{Student Member,~IEEE}, Amina
  Chebira$^\dagger$,~\IEEEmembership{Member,~IEEE} 
  and~Martin~Vetterli$^*$,~\IEEEmembership{Fellow,~IEEE}% <-this % stops a space
  \thanks{Copyright (c) 2013 IEEE. Personal use of this
      material is permitted. However, permission to use this material
      for any other purposes must be obtained from the IEEE by sending
      a request to pubs-permissions@ieee.org. $^*$ School of Computer and Communication Sciences, Ecole
    Polytechnique F\'ed\'erale de Lausanne (EPFL), CH-1015 Lausanne,
    Switzerland (contact e-mail juri.ranieri@epfl.ch).}
  \thanks{$^\dagger$ Swiss Center for Electronics and Microtechnology
    (CSEM), CH-2002 Neuch\^atel, Switzerland.}% <-this % stops a space
 }% <-this % stops a space

\maketitle

\begin{abstract} 
  A classic problem is the estimation of a set of parameters from
  measurements collected by only a few sensors. The number of sensors
  is often limited by physical or economical constraints and their
  placement is of fundamental importance to obtain accurate
  estimates. Unfortunately, the selection of the optimal sensor
  locations is intrinsically combinatorial and the available
  approximation algorithms are not guaranteed to generate good
  solutions in all cases of interest.

  We propose FrameSense, a greedy algorithm for the selection of
  optimal sensor locations. The core cost function of the algorithm is
  the \emph{frame potential}, a scalar property of matrices that
  measures the orthogonality of its rows. Notably, FrameSense is the
  first algorithm that is near-optimal in terms of mean square error,
  meaning that its solution is always guaranteed to be close to the
  optimal one.

  Moreover, we show with an extensive set of numerical experiments
  that FrameSense achieves state-of-the-art performance while having
  the lowest computational cost, when compared to other greedy
  methods.
\end{abstract}

\begin{IEEEkeywords}
Sensor placement, inverse problem, frame potential, greedy algorithm.
\end{IEEEkeywords}

\IEEEpeerreviewmaketitle

\section{Introduction}
\label{sec:introduction}

In many contexts, it is of interest to measure physical phenomena that
vary in space and time. Common examples are temperature, sound, and
pollution. Modern approaches tackling this problem are often based on
wireless sensor networks (WSN), namely systems composed of many
sensing nodes, each capable of measuring, processing and communicating
information about the surrounding environment.

Challenges and trade-offs characterize the design of a WSN. One of the
key aspects to design a successful WSN is the optimization of the
spatial locations of the sensors nodes, given the location's impact on
many relevant indicators, such as coverage, energy consumption and
connectivity. When the data collected by the WSN is used to solve
inverse problems, the optimization of the sensor locations becomes
even more critical. In fact, the location of the sensor nodes
determines the error of the solution of the inverse problem and its
optimization represents the difference between being able to obtain a
reasonable solution or not. In this work we consider linear inverse
problems defined as
\begin{align}
  \vf=\mPsi\vec{\alpha},
\label{eq:lin_model}
\end{align}
where $\vf\in\R^N$ is the measured physical field,
$\vec{\alpha}\in\R^K$ are the parameters to be estimated and
$\mPsi\in\R^{N\times K}$ is the known linear model representing the
relationship between the measurements and the parameters. Note that
this simple model can be easily adapted to more complicated
scenarios. For example, if the collected measurements are linear
combinations of the physical field, as in the presence of a sampling
kernel, we simply consider $\mPsi=\mat{\Phi}\mat{\Theta}$, where
$\mat{\Phi}$ and $\mat{\Theta}$ represent the sampling kernel and the
physical phenomenon, respectively.

The role of $\vec{\alpha}$ depends on the specific inverse
problem. For example, if the WSN is designed for \emph{source
  localization}, $\vec{\alpha}$ represents the location and the
intensity of the field sources. On the other hand, if we are planning
to \emph{interpolate} the measured samples to recover the entire
field, we may think of $\vec{\alpha}$ as its low-dimensional
representation.  In other scientific applications, for example
\cite{Vetterli:2002bs,Baraniuk:2009ve, Bruckstein:2009il}, the
solution of a linear inverse problem is a step within a complex
procedure and $\vec{\alpha}$ may not have a direct
interpretation. Nonetheless, the accurate estimation of $\vec{\alpha}$
is of fundamental importance.

It is generally too expensive or even impossible to sense the physical
field $\vf$ with $N$ sensor nodes, where $N$ is determined by the
resolution of the discrete physical field. Assume we have only $L<N$
sensors, then we need to analyze how to choose the $L$ sampling
locations such that the solution of the linear inverse problem
\eqref{eq:lin_model} has the least amount of error. Namely, we would
like to choose the \emph{most informative} $L$ rows of $\mPsi$ out of
the $N$ available ones. One could simply adopt a brute force approach
and inspect all the possible combinations for the $L$ sensor
locations.  In this case, the operation count is exponential due to its
combinatorial nature, making the approach unfeasible even for modest
values of $N$.

It is possible to significantly reduce the computational cost by
accepting a sub-optimal sensor placement produced by an approximation
algorithm. In this case, near-optimal algorithms are desired since
they \emph{always} produce a solution of \emph{guaranteed quality}. We
measure this quality as the ratio between the value of the
approximated solution and the value of the optimal one, and we call it
the \emph{approximation factor}.

\subsection{Problem Statement and Prior Art}

% In reality, the physical field is a continuous multi-dimensional function varying
% in space and time, such as $f(x,y,z,t)$. However, we consider its
% discrete version $\vf\in\R^{N}$, that is the physical field seen
% through an opportune sampling operator with a sufficiently high
% resolution with regards to the specific scenario.

We consider the linear model introduced in \eqref{eq:lin_model} and a
WSN measuring the field at only $L< N$ locations. We denote the sets of
measured locations and of available locations as
$\mathcal{L}=\{i_1,\ldots,i_L\}$ and $\mathcal{N}=\{1,\ldots,N\}$,
respectively. Note that $\mathcal{L}\subseteq \mathcal{N}$ and
$|\calL|=L$.

The measured field is denoted as $\vf_{\cal{L}}\in\R^{L}$, where the
subscript represents the selection of the elements of $\vf$ indexed by
$\cal{L}$. Consequently, we define a pruned matrix
$\mPsi_{\cal{L}}\in\R^{L\times K}$, where we kept only the rows of
$\mPsi$ indexed by $\cal{L}$. We obtain a smaller linear system of
equations,
\begin{align}
\vf_{\cal{L}}=\mPsi_{\cal{L}}\vec{\alpha},
\label{eq:red_lin_mod}
\end{align}
where we still recover $\vec{\alpha}$, but with a reduced set of
measurements, $L\ge K$. Note that we have by definition
$\mPsi_{\mathcal{N}}=\mPsi$ and $\vf_{\mathcal{N}}=\vf$.

Given the set of measurements $\vf_\mathcal{L}$, there may not exist
an $\widehat{\vec{\alpha}}$ that solves \eqref{eq:red_lin_mod}. If it
exists, the solution may not be unique. To overcome this problem, we
usually look for the least squares solution, defined as
$\widehat{\vec{\alpha}}=\arg\min_{\vec{\alpha}}\|\mPsi_{\cal{L}}\vec{\alpha}-\vf_{\cal{L}}\|_2^2.$
Assume that $\mPsi_{\cal{L}}$ has rank $K$, then this solution is
found using the Moore-Penrose pseudoinverse,
\begin{align}
\widehat{\vec{\alpha}}=\mPsi_{\cal{L}}^+\vf_{\cal{L}},\nonumber
%\label{eq:pinv}
\end{align}
where
$\mPsi_{\cal{L}}^+=(\mPsi_{\cal{L}}^*\mPsi_{\cal{L}})^{-1}\mPsi_{\cal{L}}^*$. The
pseudoinverse generalizes the concept of inverse matrix to non-square
matrices and is also known as the \emph{canonical dual frame} in frame
theory. For simplicity of notation, we introduce
$\mathbf{T}_\mathcal{L}=\mPsi_{\cal{L}}^*\mPsi_{\cal{L}}\in\R^{K\times
  K}$, a Hermitian-symmetric matrix that strongly influences the
reconstruction performance. More precisely, the error of the least
squares solution depends on the spectrum of $\mT_{\cal{L}}$. That is,
when the measurements $\vf_{\cal{L}}$ are perturbed by a zero-mean
i.i.d. Gaussian noise with variance $\sigma^2$, the mean square error
($\operatorname{MSE}$) of the least squares solution
\cite{Fickus:2011vq} is
\begin{align}
  \MSE{\wh{\vec{\alpha}}}=\|\wh{\vec{\alpha}}-\vec{\alpha}\|_2^2=\sigma^2\sum_{k=1}^K\frac{1}{\lambda_k},
\label{eq:MSE}
\end{align}
where $\lambda_k$ is the $k$-th eigenvalue of the matrix
$\mT_{\cal{L}}$. We thus state the sensor placement problem as
follows.
\begin{problem}
Given a matrix $\mPsi\in\R^{N\times K}$ and a number of sensors $L$,
find the sensor placement $\cal{L}$ such that
\begin{align}
 \arg\min_{\cal{L}} \sum_{k=1}^K\frac{1}{\lambda_k} && \text{subject
   to }&& |\mathcal{L}|=L.
\end{align}
\end{problem}
Note that if $\mT_\mathcal{L}$ is rank deficient, that is
$\operatorname{rank}(\mT_\calL)<K$, then the $\operatorname{MSE}$ is
not bounded.

A trivial choice would be to design algorithms minimizing directly the
$\operatorname{MSE}$ with some approximation procedure, such as greedy
ones. In practice, the $\operatorname{MSE}$ is not used because it has
many unfavorable local minima. Therefore, the research effort is
focused in finding tight proxies of the $\operatorname{MSE}$ that can
be efficiently optimized. In what follows, we survey different
approximation strategies and proxies from the literature.

\subsection{Prior work}
\label{sec:literature}

Classic solutions to the sensor placement problem can be classified
in three categories: convex optimization, greedy methods and
heuristics.

Convex optimization methods \cite{Shamaiah:2010hj,Joshi:2009el} are
based on the relaxation of the Boolean constraints $\{0,1\}^N$
representing the sensor placement to the convex set $[0,1]^N$. This
relaxation is usually not tight as heuristics are needed to choose the
sensor locations and there is no a-priori guarantee on the distance
from the optimal solution. The authors in \cite{Joshi:2009el} define
an online bound for the quality of the obtained solution by looking at
the gap between the primal and the dual problem.

Heuristic methods \cite{AlObaidy:2008wn,Mukherjee:2006joa,
  Lau:2008de,Chiu:2004wh,MacKay:1992ul,Wang:2004uf} are valid options
to reduce the cost of the exhaustive search, which has a prohibitive
cost. Again, even if the methods work in practice, little can be said
about the quality or the optimality of the solution.

Greedy algorithms leveraging the submodularity of the cost function
\cite{Nemhauser:1978vz} are a class of algorithms having polynomial
complexity and guaranteed performance with regards to the chosen cost
function \cite{Shamaiah:2010hj,Krause:2008vo,Das:2008uc,Naeem:2009th,
  Das:2011ue}. Since the MSE is not submodular in general
\cite{Das:2008uc,Das:2011ue}, alternative cost functions have been
considered
\cite{Shamaiah:2010hj,Krause:2008vo,Das:2008uc,Naeem:2009th,
  Das:2011ue}. The proposed methods are theoretically near-optimal
with regards to the chosen cost function, but little can be said about
the achieved MSE. Moreover, the local optimization of the proposed
cost functions are computationally demanding, often requiring the
inversion of large matrices \cite{Krause:2008vo}. Therefore,
approximations of the cost functions have been proposed
\cite{Krause:2008vo}, offering a significant speedup for an acceptable
reduction of the solution's quality.

Beside the approximation strategy, approximation algorithms are
differentiated by the chosen cost function. Under restrictive
assumptions, the $\operatorname{MSE}$ can be chosen as a cost
function, see \cite{Das:2008uc,Golovin:2010va}. In \cite{Das:2011ue},
the authors bounded the performance of greedy algorithms optimizing
$\operatorname{R}^2$, a measure of goodness of fit based on the
$\operatorname{MSE}$, using the concept of \emph{submodularity
  ratio}. However, such algorithm is generally less performant than
the greedy algorithms optimizing proxies of the
$\operatorname{MSE}$. Common proxies of the $\operatorname{MSE}$ are
inspired by \emph{information theoretic} measures such as entropy
\cite{Wang:2004uf}, cross-entropy
\cite{Ramakrishnan:2005wn,Naeem:2009th} and mutual information
\cite{Krause:2008vo}. A popular choice is the maximization of the log
determinant of $\mT_\mathcal{L}$, being the volume of the confidence
ellipsoid given by the measurements. This proxy has been historically
introduced in D-Optimal experiment design \cite{Steinberg:1984wg}, but
has also been successfully proposed as a cost function for a convex
relaxed method \cite{Joshi:2009el} and greedy algorithms
\cite{Shamaiah:2010hj}. Other proxies have also been introduced in
optimal experiment design, such as maximization of the smallest
eigenvalue $\lambda_K$ (E-Optimal design) or the maximization of the
trace of $\mT_\calL$ (T-Optimal design). A detailed description of the
different choices available for experiment design can be found in
\cite{Steinberg:1984wg}.

\begin{figure}[t!]  \centering
  \subfloat{\label{fig:a}\includegraphics[scale=0.47]{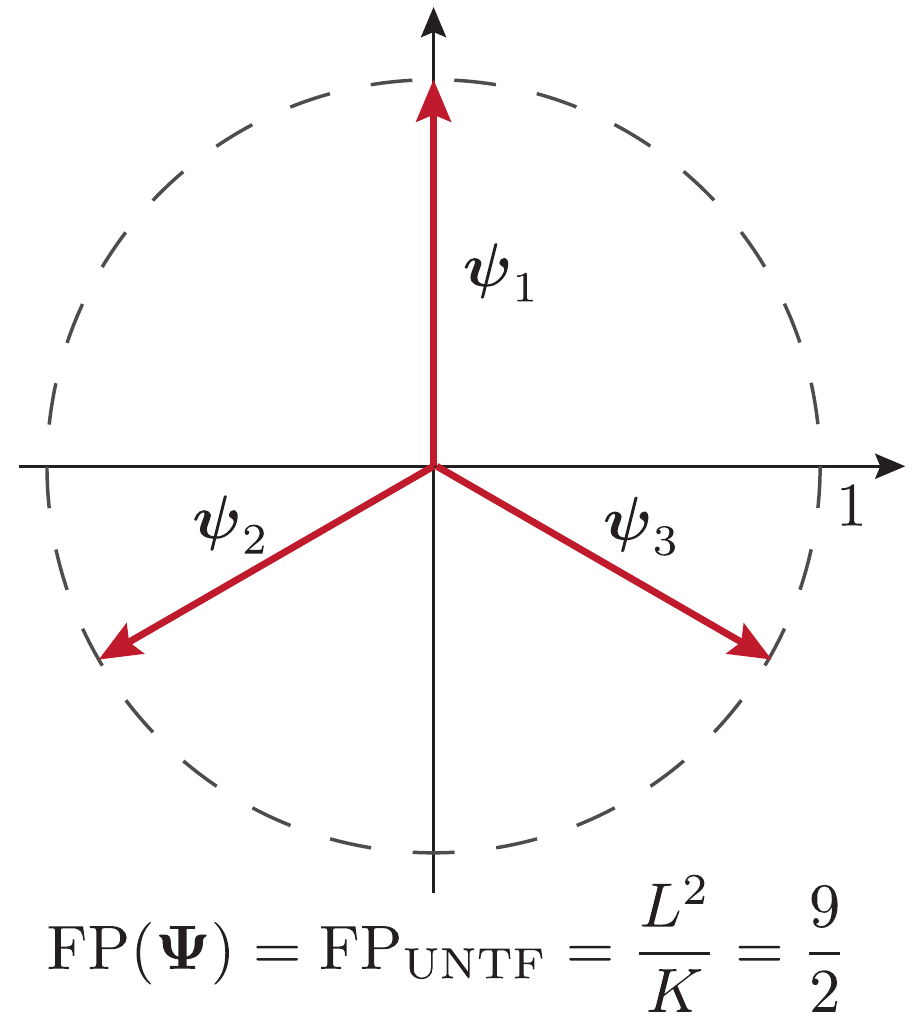}}
  \subfloat{\label{fig:b}\includegraphics[scale=0.47]{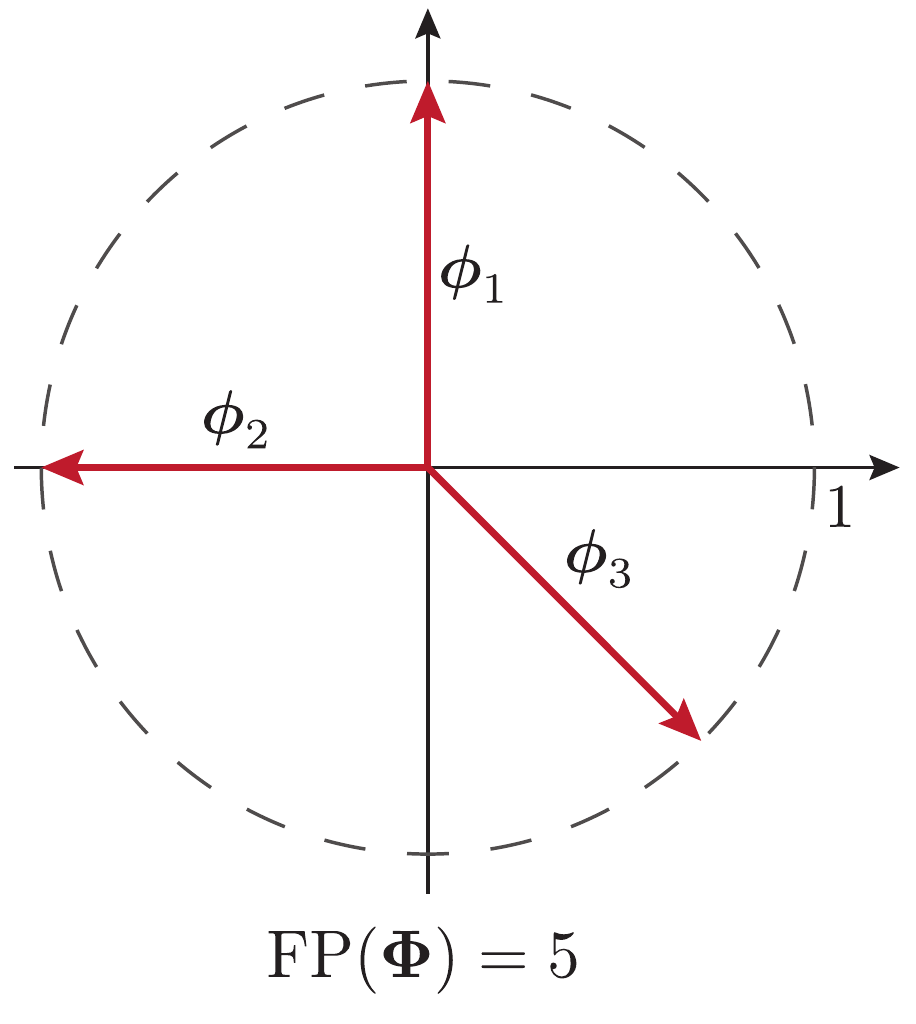}}
\vspace{-0.3cm}
\caption{A graphical representation of the unit-norm rows of two
  matrices $\mPsi$ and $\mat{\Phi}$. On the left, the rows $\mPsi$ are
  optimally spread on $\R^2$ minimizing the $\operatorname{FP}$ and
  the $\operatorname{MSE}$. This is also know as the Mercedes-Benz
  frame and is a typical example of a unit-norm tight frame. On the
  right, $\mPsi$ is the frame built by adding a vector to the
  orthonormal basis of $\R^2$ $[\mat{\phi_1},\mat{\phi}_2]$. The three
  vectors are not minimizing the $\operatorname{FP}$ and therefore
  they are not in equilibrium with regards to the frame force. You can
  envision a parallel example with three electrons on a unit circle
  under the Coulomb force. More precisely, if the three electrons are
  free to move, they would reach an equilibrium when located as the
  vectors of $\mPsi$, up to a rotation factor. }
  \label{fig:FP_ex}
\vspace{-0.4cm}
\end{figure}

Note that there exists optimal strategies with a reasonable
computational cost for some specific scenarios. This is the case when
we have the freedom of completely designing the matrix
$\mPsi_{\cal{L}}$ given the dimensions $L$ and $K$. More precisely, if
$L=K$, the optimal basis corresponds to an orthonormal basis, while if
$L>K$, then we are looking for a unit-norm tight frame
\cite{Goyal:1998tc,Benedetto:2003ut}. Benedetto et al. showed that
each tight frame is a non-unique global minimizer of the \emph{frame
  potential} ($\operatorname{FP}$), that is a scalar property of the
frame defined as
\begin{align}
\operatorname{FP}(\mPsi_{\cal{L}})=\sum_{i,j\in\cal{L}}|\left<\mpsi_i,\mpsi_j\right>|^2,\nonumber
\end{align}
where $\mpsi_i$ is the $i$th row of $\mPsi_{\cal{L}}$. One of the
reasons of the popularity of the $\operatorname{FP}$ in the frame
theory community is its interesting physical interpretation
\cite{Casazza:2006wl}. Namely, it is the potential energy of the
so-called \emph{frame force}, a force between vectors inspired by the
Coulomb force.  The frame force and its potential energy have been
introduced for their \emph{orthogonality encouraging} property: the
force is repulsive when the angle between the two vectors is acute,
null when they are orthogonal and attractive when the angle is
obtuse. A graphical explanation of this physical interpretation is
given in Figure \ref{fig:FP_ex}, where the unit-norm rows of two
matrices $\mPsi$ and $\mat{\Phi}$ belonging to $\R^{3\times 2}$ are
represented. While $\mPsi$, that is the unit-norm tight frame
minimizing the $\operatorname{FP}$, has vectors as close to
orthogonality as possible and therefore in equilibrium with regards to
the frame force, the vectors of $\mat{\Phi}$ are not optimally close
to orthogonality and the $\operatorname{FP}$ is thus not
minimized. Note that according to frame theory
\cite{Benedetto:2003ut}, $\mPsi$ is the matrix that also achieves the
minimum $\operatorname{MSE}$ (per component).

Note that the $\operatorname{FP}$ is also known as the \emph{total
  summed correlation} in communication theory \cite{Rupf:1994fl}. It
is used to optimize the signatures of CDMA systems to achieve the
Welch's lower bound \cite{Kovacevic:2007db} and maximize the capacity
of the channel.

Given its interpretation and its role in defining the existence of
tight frames---the optimal frames in terms of
$\operatorname{MSE}$---we hypothesize that the $\operatorname{FP}$ is
an interesting cost function for an approximation algorithm.

\subsection{Our contributions}

We propose FrameSense, a greedy sensor placement method that
minimizes the frame potential to choose the sensing locations
$\mathcal{L}$. We briefly summarize the innovative aspects of the
proposed algorithm:
\begin{itemize}
\item Under some stability conditions regarding the spectrum of
  $\mPsi$, FrameSense is the only known algorithm, to the best of our
  knowledge, that is near-optimal with regards to $\operatorname{MSE}$.
\item FrameSense outperforms other greedy algorithms in terms of
  $\operatorname{MSE}$.
\item FrameSense is on par with the method based on convex relaxation
  \cite{Joshi:2009el}, which uses heuristics to improve the local
  solution and has a significantly higher complexity.
\item The computational cost of FrameSense is significantly lower
  with regards to the other considered algorithms. 
\end{itemize}

\vspace{.5cm} The remainder of the paper uses the following notations:
calligraphic letters as $\mathcal{C}$ indicate sets of elements, while
bold letters as $\mPsi$ and $\mpsi$ indicates matrices and vectors,
respectively.  The $k$-th largest eigenvalue of
$\mathbf{T}_\mathcal{L}=\mPsi^*_\mathcal{L}\mPsi_\mathcal{L}$ is
denoted as $\lambda_k$. Moreover, we always consider a real physical
field $\vf$ and real matrices $\mPsi$ for simplicity but the extension
to the complex domain does not require major adjustments.
\vspace{.5cm}

The content is organized as follows. In Section \ref{sec:FP} we
introduce some frame theory concepts focusing on the role of the
$\operatorname{FP}$. We describe FrameSense and the analysis of its
near-optimality in Section \ref{sec:smartsense} and we numerically
compare its performance with various other algorithms in Section
\ref{sec:num_exp}.

\section{The frame potential in frame theory}
\label{sec:FP} 
This section briefly introduces some of the basic concepts of frame
theory that are useful to understand and analyze the proposed
algorithm.  Frame theory studies and designs families of matrices
$\mPsi_\mathcal{L}$ such that $\mT_\mathcal{L}$ is
well-conditioned. More precisely, $\mPsi_\mathcal{L}$ is a frame for a
Hilbert space $\mathbb{H}$ if there exists two scalars $A$ and $B$
such that $0<A\le B<\infty$ so that for every
$\mathbf{x}\in\mathbb{H}$ we have
\begin{align}
A\|\mathbf{x}\|_2^2\le \|\mPsi_\mathcal{L} \mathbf{x}\|_2^2 \le
B\|\mathbf{x}\|_2^2, \nonumber
\end{align}
where $A$ and $B$ are called frame bounds. $\mPsi_\mathcal{L}$ is a
tight frame when $A=B$ and its columns are orthogonal by
construction. Of particular interest is the case of unit norm tight
frames (UNTF); these are tight frames whose frame elements---the rows
$\mpsi_i$---have unit norm. These provide Parseval-like relationships,
despite the non-orthogonality of the frame elements of
$\mPsi_\mathcal{L}$. In addition to \eqref{eq:MSE}, there are other
interesting relationships between the characteristics of
$\mPsi_\mathcal{L}$ and the spectrum of $\mT_\mathcal{L}$. For
example, we can express the $\operatorname{FP}$ as
\begin{align}
\operatorname{FP}(\mPsi_\mathcal{L})=\operatorname{Trace}({\mT_\calL}^*\mT_\calL)=\sum_{k=1}^K
|\lambda_k|^2. \nonumber
\end{align} 
Moreover, the sum of the eigenvalues of $\mT_\mathcal{L}$ is equal to
the sum of the norm of the rows, $\sum_{i=1}^{L} \|\mpsi_i\|^2 =
\sum_{k=1}^K \lambda_k$. 

These quantities of interest take a simplified analytical form for
UNTFs. In this scenario, we know \cite{Benedetto:2003ut} that the
$\operatorname{FP}$ is minimum with regards to all other matrices of
the same size with unit-norm rows, and it is equal to
$\operatorname{FP}_\text{UNTF}=\frac{L^2}{K}$. According to
\cite{Benedetto:2003ut}, the optimal $\operatorname{MSE}$ is also
achieved when the $\operatorname{FP}$ is minimized and it is equal to
$\operatorname{MSE}_\text{UNTF}=\frac{K^2}{L}$. Note that in this case
all the eigenvalues are equal,
$\lambda_\text{UNTF}=\lambda_i=\frac{L}{K}\;\forall i$.

Next, we would like to intuitively explain why the $\operatorname{FP}$
is a good candidate to be a proxy for the
$\operatorname{MSE}$. Consider the distance between the
$\operatorname{FP}$ of a matrix with unit-norm rows
$\mPsi_\calL\in\R^{L\times K}$ and the $\operatorname{FP}$ of a
UNTF. Then, it is possible to show that this distance
is always positive and equal to
\begin{align}
\operatorname{FP}(\mPsi_\calL)-\operatorname{FP}_\text{UNTF}=\sum_{k=1}^{K}\left(\lambda_k-\frac{L}{K}\right)^2,\nonumber
\end{align}
where $\frac{L}{K}$ is the value of the eigenvalues
$\lambda_\text{UNTF}$. Note that if we minimize the
$\operatorname{FP}$ of $\mPsi_\calL$, then each $\lambda_k$ converges
to $\frac{L}{K}$.

At the same time, the distance between the $\operatorname{MSE}$ of
$\mPsi_\calL$ and the $\operatorname{MSE}$ of the UNTF can be
expressed as
\begin{align}
\operatorname{MSE}(\mPsi_\calL)-\operatorname{MSE}_\text{UNTF}=\sum_{k=1}^K\left(\frac{1}{\lambda_k}-\frac{K}{L}\right).\nonumber
\end{align}
Now, it is easy to see that if the eigenvalues converge to
$\frac{L}{K}$, then $\operatorname{MSE}(\mPsi_\calL)$ converges to the
$\operatorname{MSE}$ of a UNTF, being also the optimal one
\cite{Benedetto:2003ut}.

\section{FrameSense: a near-optimal sensor placement}
\label{sec:smartsense}
Even if the intuition given in Section \ref{sec:FP} is clear, it does
not directly explain why an algorithm placing the sensor according to
the $\operatorname{FP}$ would perform well in terms of
$\operatorname{MSE}$.  Indeed, we need to address some complications
such as matrices $\mPsi$ having rows with different norms and the
non-uniform convergence of the eigenvalues. In what follows, we first
describe the details of FrameSense and then analyze its
near-optimality in terms of both the $\operatorname{FP}$ and the
$\operatorname{MSE}$.

\subsection{The algorithm}
FrameSense finds the sensor locations $\mathcal{L}$ given the known
model $\mPsi$ and the number of available sensors nodes $L$ with a
greedy minimization of the $\operatorname{FP}$. It is a greedy
``worst-out'' algorithm: at each iteration it removes the row of
$\mPsi$ that maximally increases the $\operatorname{FP}$. In other
words, we define a set of locations $\calS$ that are not suitable for
sensing and at each iteration we add to $\calS$ the row that maximizes
the following cost function:
\begin{align}
  F(\mathcal{S})=\operatorname{FP}(\mPsi)-\operatorname{FP}(\mPsi_{\mathcal{N}\setminus\mathcal{S}}).
\label{eq:cost_fun}
\end{align}
The pseudo-code for FrameSense is given in Algorithm
\ref{alg:greedyFP}.

\begin{algorithm} 
  \algorithmicrequire{~Linear Model $\mPsi$, Number of sensors
    $L$} \\
  \algorithmicensure{~ Sensor locations $\cal{L}$}\\
  \noindent\rule[1ex]{240pt}{.3pt}
  \begin{enumerate}
\item Initialize the set of locations, $\mathcal{L}=\emptyset$.
\item Initialize the set of available locations, ${\cal{N}}=\{1,\ldots,N\}$.
\item Find the first two rows to eliminate, $\mathcal{S}=\arg
  \max_{i,j\in\mathcal{N}}|\left<\mpsi_{i},\mpsi_{j}\right>|^2$.
\item Update the available locations, $\mathcal{L}=\mathcal{N}\setminus\mathcal{S}$.

\item {\bf Repeat until $L$ locations are found}
  \begin{enumerate}
  \item If $|\mathcal{S}|=N-L$, stop.
  \item Find the optimal row, $i^*=\arg \min_{i\in{\cal{L}}}
   F\left({\cal{S}}\cup i\right)$.
  \item Update the set of removed locations, $\mathcal{S}=\mathcal{S}\cup i^*$. 
\item Update the available locations,
  $\mathcal{L}=\mathcal{L}\setminus i^*$.
\end{enumerate}
\end{enumerate}
  \caption{FrameSense}
  \label{alg:greedyFP}
\end{algorithm}

One may ask why we do not optimize directly the $\operatorname{MSE}$,
instead of minimizing the $\operatorname{FP}$, which indirectly
optimizes the $\operatorname{MSE}$. As we have already indicated, a
greedy algorithm optimizing a general function, like the
$\operatorname{MSE}$, converges to a local stationary point of the
cost function and we have no guarantee on the distance from the global
optimum. On the other hand, we can prove that FrameSense is
near-optimal with regards to the $\operatorname{FP}$ by exploiting the
submodularity of the cost function. In addition, we also guarantee the
performance of FrameSense in terms of the $\operatorname{MSE}$,
exploiting a link between $\operatorname{FP}$ and
$\operatorname{MSE}$.

\subsection{Near-optimality of FrameSense with regards to $\operatorname{FP}$}
We define the performance of FrameSense with regards to
$\operatorname{FP}$ using the theory of submodular functions. We start
by defining the concept of submodularity that relates to the concept
of diminishing returns: if we add an element to a set $\mathcal{Y}$,
the benefit is smaller or equal than adding the same element to one of
the subsets of $\mathcal{Y}$. Then, we introduce a theorem by
Nemhauser et al. \cite{Nemhauser:1978vz} that defines the
approximation factor of greedy algorithms maximizing a submodular
function. We continue by showing that FrameSense satisfies the
conditions of Nemhauser's theorem and we derive its approximation
factor in terms of $\operatorname{FP}$.

\begin{definition}[Submodular function]
Given two sets
$\mathcal{X}$ and $\mathcal{Y}$ such that
$\mathcal{X}\subset\mathcal{Y}\subset\mathcal{N}$ and given an
element $i\in \mathcal{N}\setminus\mathcal{Y}$, a function $G$ is 
submodular if it satisfies
\begin{align}
G(\mathcal{X}\cup i)-G(\mathcal{X})\ge G(\mathcal{Y}\cup
i)-G(\mathcal{Y}).
\end{align}
\label{def:submodularity}
\end{definition}

Submodular functions are useful in combinatorial optimization because
greedy algorithms have favorable properties when optimizing a function
with such a property. More precisely, it has been proved that the greedy
maximization of submodular functions is near-optimal
\cite{Nemhauser:1978vz}.

\begin{theorem}[Near-optimal maximization of submodular function \cite{Nemhauser:1978vz}]
  Let $G$ be a normalized, monotone, submodular set function over a
  finite set $\mathcal{N}$. Let $\mathcal{L}$ be the set of $L$
  elements chosen by the greedy algorithm, and let $\operatorname{OPT}
  = \max_{\mathcal{A}\subset \mathcal{N},|\mathcal{A}|=L}
  G(\mathcal{A})$ be the optimal set of elements. Then
\begin{align}
G(\mathcal{L})\ge \left(1 -
  \frac{1}{e}\right)\operatorname{G}(\operatorname{OPT}), \nonumber
\end{align}
where $e$ is Euler's number. 
\label{thm:nemhauser}
\end{theorem}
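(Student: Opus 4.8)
The plan is to reproduce the classical argument of Nemhauser, Wolsey and Fisher, which tracks how much of the optimal value the greedy solution has accumulated after each step. Write $\mathcal{L}_0 = \emptyset \subset \mathcal{L}_1 \subset \cdots \subset \mathcal{L}_L = \mathcal{L}$ for the nested sequence of sets built by the greedy algorithm, where $\mathcal{L}_{i+1}$ is obtained from $\mathcal{L}_i$ by adding the element of largest marginal gain. Let $\mathcal{A}^*$ be an optimal set of size $L$, so that $G(\mathcal{A}^*) = \operatorname{OPT}$. The crux of the proof is a single-step progress inequality: for every $i = 0,\ldots,L-1$,
\begin{align}
\operatorname{OPT} - G(\mathcal{L}_i) \le L\bigl(G(\mathcal{L}_{i+1}) - G(\mathcal{L}_i)\bigr).
\label{eq:progress}
\end{align}

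To prove \eqref{eq:progress} I would first use monotonicity to write $\operatorname{OPT} = G(\mathcal{A}^*) \le G(\mathcal{A}^* \cup \mathcal{L}_i)$. Enumerating the elements of $\mathcal{A}^* \setminus \mathcal{L}_i$ as $a_1,\ldots,a_m$ with $m \le L$, a telescoping sum gives
\begin{align}
G(\mathcal{A}^* \cup \mathcal{L}_i) - G(\mathcal{L}_i) = \sum_{j=1}^{m}\Bigl[G\bigl(\mathcal{L}_i \cup \set{a_1,\ldots,a_j}\bigr) - G\bigl(\mathcal{L}_i \cup \set{a_1,\ldots,a_{j-1}}\bigr)\Bigr]. \nonumber
\end{align}
Each summand is a marginal gain of $a_j$ over a set containing $\mathcal{L}_i$, so by submodularity (Definition~\ref{def:submodularity}) it is at most $G(\mathcal{L}_i \cup a_j) - G(\mathcal{L}_i)$; and because the greedy rule selects the element maximizing the marginal gain over $\mathcal{L}_i$, this is at most $G(\mathcal{L}_{i+1}) - G(\mathcal{L}_i)$. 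Summing over the $m \le L$ terms yields \eqref{eq:progress}.

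The remainder is a routine recursion. Put $\delta_i := \operatorname{OPT} - G(\mathcal{L}_i)$; since $G(\mathcal{L}_{i+1}) - G(\mathcal{L}_i) = \delta_i - \delta_{i+1}$, inequality \eqref{eq:progress} rearranges to $\delta_{i+1} \le (1 - 1/L)\,\delta_i$, hence $\delta_L \le (1 - 1/L)^L\,\delta_0$. Because $G$ is normalized, $\delta_0 = \operatorname{OPT} - G(\emptyset) = \operatorname{OPT}$, and using the elementary bound $(1 - 1/L)^L \le e^{-1}$ we obtain $G(\mathcal{L}) = \operatorname{OPT} - \delta_L \ge (1 - e^{-1})\operatorname{OPT}$, as claimed.

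The one genuinely delicate point is the single-step inequality \eqref{eq:progress}: submodularity must be applied to the correct nested pair (the prefix $\mathcal{L}_i \cup \set{a_1,\ldots,a_{j-1}}$ does contain $\mathcal{L}_i$), and greedy maximality must be invoked for the marginal gain over exactly $\mathcal{L}_i$ rather than over the current prefix. Everything else — the monotonicity step passing from $\mathcal{A}^*$ to $\mathcal{A}^* \cup \mathcal{L}_i$, the telescoping, and the geometric recursion — is bookkeeping.
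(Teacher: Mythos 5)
Your proof is correct: it is the classical Nemhauser--Wolsey--Fisher argument (monotonicity to pass to $\mathcal{A}^*\cup\mathcal{L}_i$, telescoping over $\mathcal{A}^*\setminus\mathcal{L}_i$, submodularity plus greedy maximality to get the per-step inequality, then the geometric recursion with $(1-1/L)^L\le e^{-1}$), which is exactly the proof in the cited reference; the paper itself states this theorem without proof, so there is no alternative route to compare against.
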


Namely, if $G$ satisfies the conditions of Theorem
\ref{thm:nemhauser}, then the solution of the greedy algorithm is
always close to the optimal one. These conditions are satisfied by the
cost function $F$ in \eqref{eq:cost_fun}, as shown in the following
lemma.

\begin{lemma}[Submodularity of the cost function]
  The set function maximized in Algorithm \ref{alg:greedyFP},
\begin{align}
 F(\mathcal{S})=\operatorname{FP}(\mPsi)-\operatorname{FP}(\mPsi_{\mathcal{N}\setminus\mathcal{S}}),
\end{align}
 is a normalized, monotone, submodular function.
\label{lemma:submodular}
\end{lemma}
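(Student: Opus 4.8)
The plan is to verify the three properties — normalized, monotone, submodular — directly from the definition of $F$ and the explicit formula for $\operatorname{FP}$ in terms of inner products of rows. The key observation is that removing a set $\mathcal{S}$ of rows changes $\operatorname{FP}$ only by the terms in the double sum $\sum_{i,j}|\langle\mpsi_i,\mpsi_j\rangle|^2$ that involve at least one index in $\mathcal{S}$. Concretely, I would write
\begin{align}
F(\mathcal{S})=\operatorname{FP}(\mPsi)-\operatorname{FP}(\mPsi_{\mathcal{N}\setminus\mathcal{S}})=\sum_{i\in\mathcal{S}}\sum_{j\in\mathcal{N}}|\langle\mpsi_i,\mpsi_j\rangle|^2+\sum_{j\in\mathcal{S}}\sum_{i\in\mathcal{N}\setminus\mathcal{S}}|\langle\mpsi_i,\mpsi_j\rangle|^2,\nonumber
\end{align}
or, more symmetrically, $F(\mathcal{S})=\sum_{(i,j):\,i\in\mathcal{S}\text{ or }j\in\mathcal{S}}|\langle\mpsi_i,\mpsi_j\rangle|^2$. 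This closed form makes all three properties essentially transparent.

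First, $F(\emptyset)=0$ since the empty set contributes no terms, giving normalization. Second, for monotonicity, adding a row index $k$ to $\mathcal{S}$ introduces the extra nonnegative terms $|\langle\mpsi_k,\mpsi_j\rangle|^2$ for all $j$ not already covered; since each summand is a squared magnitude, $F(\mathcal{S}\cup k)-F(\mathcal{S})\ge 0$. Third, and this is the heart of the argument, submodularity: I would compute the marginal gain $F(\mathcal{S}\cup k)-F(\mathcal{S})$ explicitly. When we add index $k$ to $\mathcal{S}$, the newly counted pairs are exactly those $(k,j)$ and $(j,k)$ with $j\notin\mathcal{S}$ (including $j=k$), so
\begin{align}
F(\mathcal{S}\cup k)-F(\mathcal{S})=2\sum_{j\in(\mathcal{N}\setminus\mathcal{S})\setminus\{k\}}|\langle\mpsi_k,\mpsi_j\rangle|^2+|\langle\mpsi_k,\mpsi_k\rangle|^2.\nonumber
\end{align}
Now if $\mathcal{X}\subset\mathcal{Y}$, then $\mathcal{N}\setminus\mathcal{Y}\subseteq\mathcal{N}\setminus\mathcal{X}$, so the sum over $j\in(\mathcal{N}\setminus\mathcal{Y})\setminus\{k\}$ has fewer (or equal) nonnegative terms than the corresponding sum over $j\in(\mathcal{N}\setminus\mathcal{X})\setminus\{k\}$; hence the marginal gain for $\mathcal{Y}$ is at most that for $\mathcal{X}$, which is precisely Definition \ref{def:submodularity}.

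The main obstacle is not conceptual difficulty but bookkeeping: one must be careful about the diagonal terms $i=j$ and about double-counting ordered pairs $(i,j)$ versus $(j,i)$ in the definition $\operatorname{FP}(\mPsi_{\cal{L}})=\sum_{i,j\in\cal{L}}|\langle\mpsi_i,\mpsi_j\rangle|^2$, so that the marginal-gain formula is stated correctly. Once the marginal gain is written as a sum of squared inner products over a set of indices that shrinks monotonically as the base set grows, all three claims follow immediately, with no spectral or frame-theoretic input needed — only nonnegativity of $|\langle\mpsi_k,\mpsi_j\rangle|^2$.
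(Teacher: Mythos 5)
Your proposal is correct and follows essentially the same route as the paper: both expand $\operatorname{FP}$ as a double sum of squared inner products and show that the marginal gain of adding an index is a sum of nonnegative terms $|\langle\mpsi_k,\mpsi_j\rangle|^2$ over indices outside the current set, which immediately gives normalization, monotonicity, and submodularity. The only (minor) difference is that you compare marginal gains directly for arbitrary nested sets $\mathcal{X}\subset\mathcal{Y}$ via inclusion of the summation ranges, whereas the paper reduces to $\mathcal{Y}=\mathcal{X}\cup j$ and computes the second difference to be $2|\langle\mpsi_i,\mpsi_j\rangle|^2\ge 0$; your version sidesteps that (standard) reduction, but the underlying computation is identical.
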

\begin{proof} The set function $F$ is normalized if
  $F(\emptyset)=0$. Here, normalization is trivially shown since
  $\mPsi=\mPsi_{\mathcal{N}}$ by definition. To show monotonicity, we
  pick a generic matrix $\mPsi$ of $N$ rows, a set $\mathcal{X}$ and
  an index $i\notin \mathcal{X}$. Then, we compute the increment of
  $F$ due to $i$ with regards to the set $\calX$, showing that it
  is always positive.
\begin{align}
F(\mathcal{X}\cup i)-F(\mathcal{X})&=\operatorname{FP}(\mPsi_{\mathcal{N}\setminus\mathcal{X}})-
\operatorname{FP}(\mPsi_{\mathcal{N}\setminus\mathcal{X}\cup{i}}) \nonumber \\
&\overset{(a)}{=}\hspace{-3mm}\sum_{n,m\in
  \mathcal{A}\cup{i}}\hspace{-2mm}|\langle\mpsi_n,\mpsi_m\rangle|^2-\hspace{-2mm}\sum_{n,m\in
  \mathcal{A}}\hspace{-1mm}|\langle\mpsi_n,\mpsi_m\rangle|^2\nonumber \\
&=2\sum_{n\in
  \mathcal{A}}|\langle\mpsi_n,\mpsi_i\rangle|^2+|\langle\mpsi_i,\mpsi_i\rangle|^2\ge
0, \nonumber
\end{align}
where $(a)$ is due to a change of variable
$\mathcal{N}\setminus\mathcal{X}=\mathcal{A}$.  Assuming without loss
of generality that $\mathcal{Y}=\mathcal{X}\cup j$, we check the
submodularity according to Definition \ref{def:submodularity} .
\begin{align}
&F(\mathcal{X}\cup i)-F(\mathcal{X})-F(\mathcal{Y}\cup
i)+F(\mathcal{Y})\nonumber\\
&=F(\mathcal{X}\cup i)-F(\mathcal{X})-F(\mathcal{X}\cup
\set{i,j})+F(\mathcal{X}\cup j)\nonumber \\
&=\operatorname{FP}(\mPsi_{\mathcal{A}\cup \set{i,j}})-
\operatorname{FP}(\mPsi_{\mathcal{A}\cup {j}}) - 
\operatorname{FP}(\mPsi_{\mathcal{A}\cup{i}})+
\operatorname{FP}(\mPsi_{\mathcal{A}}) \nonumber \\
&=2\sum_{n\in\mathcal{A}\cup{j}}|\langle \mpsi_n,\mpsi_i\rangle|^2 -
2\sum_{n\in\mathcal{A}}|\langle \mpsi_n,\mpsi_i\rangle|^2\nonumber \\
&=2|\langle\mpsi_i,\mpsi_j\rangle|^2\ge 0. \nonumber
\end{align}
\end{proof}
Now, we use Theorem \ref{thm:nemhauser} to derive the approximation
factor of FrameSense with regards to the $\operatorname{FP}$.

\begin{theorem}[$\operatorname{FP}$ approximation factor]
\label{thm:FP}
Consider a matrix $\mPsi\in\R^{N\times K}$ and a given number of
sensors $L$, such that $K\le L < N$. Denote the optimal set of
locations as $\text{OPT}=\arg\max_{\mathcal{A}\subset\mathcal{N},
  |\mathcal{A}|=L} \operatorname{FP}(\mPsi_\mathcal{A})$ and the
greedy solution found by FrameSense as $\mathcal{L}$. Then,
$\mathcal{L}$ is near-optimal with regards to the $\operatorname{FP}$,
\begin{align}
  \operatorname{FP}(\mPsi_\mathcal{L})\le \gamma \operatorname{FP}(\mPsi_\text{OPT}),
\end{align}
where
$\gamma=\left(1+\frac{1}{e}\left(\operatorname{FP}(\mPsi)\frac{K}{{{L_\text{MIN}}^2}}-1\right)\right)$
is the approximation factor and $L_\text{MIN}=\min_{|\mathcal{L}|=L}
\sum_{i\in\calL}\|\mpsi_i\|^2$ is the sum of the norms of the $L$ rows
with the smallest norm.  
\end{theorem}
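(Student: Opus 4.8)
The plan is to invoke Nemhauser's theorem (Theorem~\ref{thm:nemhauser}) for the submodular cost function $F$ and then convert the additive guarantee it produces into the multiplicative form claimed here, using a simple spectral lower bound on $\operatorname{FP}(\mPsi_\text{OPT})$.

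First I would note that FrameSense builds the discard set $\mathcal{S}$, with $|\mathcal{S}|=N-L$, by greedily maximizing $F$, and that by Lemma~\ref{lemma:submodular} $F$ is normalized, monotone and submodular. Hence Theorem~\ref{thm:nemhauser}, applied with ground set $\mathcal{N}$ and $N-L$ selected elements, gives $F(\mathcal{S})\ge(1-1/e)\max_{|\mathcal{A}|=N-L}F(\mathcal{A})$. Since $F(\mathcal{A})=\operatorname{FP}(\mPsi)-\operatorname{FP}(\mPsi_{\mathcal{N}\setminus\mathcal{A}})$, maximizing $F$ over discard sets of size $N-L$ is the same as minimizing $\operatorname{FP}(\mPsi_\mathcal{B})$ over selection sets $\mathcal{B}$ of size $L$; since $\text{OPT}$ is one such feasible set, $\min_{|\mathcal{B}|=L}\operatorname{FP}(\mPsi_\mathcal{B})\le\operatorname{FP}(\mPsi_\text{OPT})$, and therefore $\max_{|\mathcal{A}|=N-L}F(\mathcal{A})\ge\operatorname{FP}(\mPsi)-\operatorname{FP}(\mPsi_\text{OPT})$. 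Substituting $F(\mathcal{S})=\operatorname{FP}(\mPsi)-\operatorname{FP}(\mPsi_\mathcal{L})$ and rearranging then yields $\operatorname{FP}(\mPsi_\mathcal{L})\le\tfrac1e\operatorname{FP}(\mPsi)+(1-\tfrac1e)\operatorname{FP}(\mPsi_\text{OPT})$.

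Second I would establish the spectral lower bound $\operatorname{FP}(\mPsi_\text{OPT})\ge L_\text{MIN}^2/K$: for any $\mathcal{A}$ with $|\mathcal{A}|=L$, the eigenvalues $\lambda_1,\dots,\lambda_K\ge0$ of $\mathbf{T}_\mathcal{A}=\mPsi_\mathcal{A}^*\mPsi_\mathcal{A}$ satisfy $\sum_k\lambda_k=\sum_{i\in\mathcal{A}}\|\mpsi_i\|^2\ge L_\text{MIN}$, while Cauchy--Schwarz gives $\operatorname{FP}(\mPsi_\mathcal{A})=\sum_k\lambda_k^2\ge\frac1K\big(\sum_k\lambda_k\big)^2\ge L_\text{MIN}^2/K$; taking $\mathcal{A}=\text{OPT}$ gives the claim.

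Finally I would combine the two: since $\operatorname{FP}(\mPsi_\text{OPT})\ge L_\text{MIN}^2/K$ we have $\tfrac1e\operatorname{FP}(\mPsi)\le\tfrac1e\operatorname{FP}(\mPsi)\frac{K}{L_\text{MIN}^2}\operatorname{FP}(\mPsi_\text{OPT})$, so the bound from the first step becomes $\operatorname{FP}(\mPsi_\mathcal{L})\le\big(1-\tfrac1e+\tfrac1e\operatorname{FP}(\mPsi)\tfrac{K}{L_\text{MIN}^2}\big)\operatorname{FP}(\mPsi_\text{OPT})=\gamma\operatorname{FP}(\mPsi_\text{OPT})$, exactly as asserted. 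The individual steps are all short; the only place that needs genuine care is matching Nemhauser's ``choose $L$ of $N$'' setup to FrameSense's ``discard $N-L$ of $N$'' setup — including verifying its hypotheses via Lemma~\ref{lemma:submodular} — together with the small observation that feasibility of $\text{OPT}$ for the dual FP-minimization is precisely what converts the additive Nemhauser bound into a multiplicative one.
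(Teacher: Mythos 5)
Your proposal is correct and follows essentially the same route as the paper's proof: apply Nemhauser's theorem to the worst-out (discard-set) formulation via Lemma~\ref{lemma:submodular}, translate the additive guarantee back to the selection sets, and then turn it into the multiplicative factor $\gamma$ by lower-bounding $\operatorname{FP}(\mPsi_\text{OPT})\ge L_\text{MIN}^2/K$. The only cosmetic difference is that you obtain this last bound explicitly via Cauchy--Schwarz on the eigenvalues of $\mathbf{T}_\text{OPT}$, whereas the paper invokes the tight-frame lower bound on the frame potential with total row norm $L_\text{OPT}\ge L_\text{MIN}$ --- the same inequality.
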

\begin{IEEEproof}
  According to Lemma \ref{lemma:submodular}, the cost function used in
  FrameSense satisfies the conditions of Theorem
  \ref{thm:nemhauser}. Therefore,
\begin{align}
F(\overline{\text{OPT}})-F(\mathcal{S})\le \frac{1}{e}(F(\overline{\text{OPT}})),\nonumber
\end{align}
where
$F(\mathcal{S})=\operatorname{FP}(\mPsi)-\operatorname{FP}(\mPsi_{\mathcal{N}\setminus\mathcal{S}})$
is the considered cost function, $\mathcal{S}$ is the set of rows
eliminated by FrameSense and
$\overline{\text{OPT}}=\mathcal{N}\setminus \text{OPT}$. If we
consider the cost function, we obtain
\begin{align}
  \operatorname{FP}(\mPsi_{\mathcal{N}\setminus\mathcal{S}})\le
  \left(1-\frac{1}{e}\right)
  \operatorname{FP}(\mPsi_{\mathcal{N}\setminus\overline{\text{OPT}}})+
  \frac{1}{e}\operatorname{FP}(\mPsi).
\label{eq:boh2}
\end{align}
Then, we note that the following minimization problem,
\begin{align}
  \operatorname{minimize}_\mathcal{L}
  \operatorname{FP}(\mPsi_\mathcal{L})&&\text{subject to }&& |\mathcal{L}|=L, \nonumber
\end{align}
is equivalent, under the change of variable
$\mathcal{L}=\mathcal{N}\setminus\mathcal{S}$, to
\begin{align}
  \operatorname{minimize}_\mathcal{S}
  \operatorname{FP}(\mPsi_{\mathcal{N}\setminus\mathcal{S}})  &&
  \text{subject to }&&|\mathcal{S}|=N-L. \nonumber
\end{align}
Using the equivalence  in \eqref{eq:boh2}, we obtain
\begin{align}
\operatorname{FP}(\mPsi_\mathcal{L})\le\left(1
    +\frac{1}{e}\left(\frac{\operatorname{FP}(\mPsi)}{\operatorname{FP}(\mPsi_\text{OPT})}
      -1\right)\right)
  \operatorname{FP}(\mPsi_\text{OPT}), \nonumber
\end{align}
To conclude the proof, we bound from above the term
$\frac{1}{e}\frac{\operatorname{FP}(\mPsi)}{\operatorname{FP}(\mPsi_\text{OPT})}$. First,
  we consider the optimal solution $\text{OPT}$ to select a
  tight frame whose rows have a summed norm of
  $L_\text{OPT}=\sum_{i\in\text{OPT}}\|\mpsi_i\|^2$,
\begin{align}
  \operatorname{FP}(\mPsi_\mathcal{L})\le\left(1 +\frac{1}{e}
    \left(\operatorname{FP}(\mPsi) \frac{K}{{L_\text{OPT}}^2}
      -1\right)\right) \operatorname{FP}(\mPsi_\text{OPT}).\nonumber
\end{align}
Then, we assume that $\text{OPT}$ selects the rows having the smallest norm, $L_\text{OPT}\ge
L_\text{MIN}=\min_{|\mathcal{L}|=L}\sum_{i\in\calL}\|\mpsi_i\|^2.$
\end{IEEEproof}

Note that the $\operatorname{FP}$ of the original matrix influences
significantly the final result: the lower the $\operatorname{FP}$ of
$\mPsi$, the tighter the approximation obtained by the greedy
algorithm. Therefore, FrameSense performs better when the original
matrix $\mPsi$ is \emph{closer} to a tight frame. In fact, the
$\operatorname{FP}$ is bounded as follows, 
\begin{align}
\frac{\left(\sum_{i=1}^N\|\mpsi_i\|^2\right)^2}{K}\le\operatorname{FP}(\mPsi)\le \left(\sum_{i=1}^N\|\mpsi_i\|^2\right)^2\nonumber,
\end{align}
where the lower bound is reached by tight frames and the upper bound
by rank 1 matrices. 

Moreover, Theorem \ref{thm:FP} suggests to remove from $\mPsi$ the
rows whose norm is significantly smaller with regards to the others to
improve the performance of FrameSense. This suggestion is intuitive,
since such rows are also the least \emph{informative}.

\subsection{Near-optimality of FrameSense with regards to $\operatorname{MSE}$}
Having a near-optimal $\operatorname{FP}$ does not necessarily mean
that the obtained $\operatorname{MSE}$ is also near-optimal. Here, we
show that, under some assumptions on the spectrum of $\mPsi$,
FrameSense is near-optimal with regards to the $\operatorname{MSE}.$

Before going to the technical details, we generalize the concept of
number of sensors to account for the norms of the rows of
$\mPsi$. More precisely, we keep $L$ as the number of rows and we
define $L_\mathcal{A}=\sum_{i\in\mathcal{A}}\|\mpsi_i\|^2$ as the sum
of the norms of the rows of $\mPsi_\mathcal{A}$ for a generic set
$\mathcal{A}$. We also define the two extremal values of
$L_\mathcal{A}$,
\begin{align}
&L_\text{MIN}=\min_{\mathcal{A}\in\mathcal{N}, |\mathcal{A}|=L}
\sum_{i\in\mathcal{A}}\|\mpsi_i\|^2 \label{eq:Lmin},\\
&L_\text{MAX}=\max_{\mathcal{A}\in\mathcal{N}, |\mathcal{A}|=L}
\sum_{i\in\mathcal{A}}\|\mpsi_i\|^2,\label{eq:Lmax}
\end{align}
indicating respectively the minimum and the maximum value of
$L_\mathcal{A}$ among all possible selections of $L$ out of $N$ rows
of $\mPsi$. $L_\mathcal{A}$ is also connected to the spectrum of
$\mT_\mathcal{A}$. Indeed, $L_\mathcal{A}$ is the trace of
$\mT_\mathcal{A}$ and thus it is also the sum of its eigenvalues,
\begin{align}
L_\mathcal{A}=\sum_{i\in\mathcal{A}}\|\mpsi_i\|^2=\operatorname{Trace}(\mT_\mathcal{A})=\sum_{i\in\mathcal{A}}\lambda_i.\nonumber
\end{align}
If $\mPsi$ has rows with unit-norm, then
$L_\mathcal{A}=L_\text{MIN}=L_\text{MAX}=L$.

As a first step to prove the near-optimality
with regards to $\operatorname{MSE}$, we consider a possible placement
$\mathcal{A}$ and we bound the $\operatorname{MSE}$ of the matrix
$\mPsi_\mathcal{A}$ using its $\operatorname{FP}$ and the spectrum of
$\mT_\mathcal{A}$. To obtain such a bound, we use a known inequality
\cite{Sharma:2008ur} involving variance, arithmetic mean and harmonic
mean of a set of positive bounded numbers, in this case the
eigenvalues of $\mT_\mathcal{A}$. The following lemma describes the
bound, while its proof is given in Appendix \ref{app:bound_MSE}.

\begin{lemma}[MSE bound]
  Consider any $\mPsi_\mathcal{A}\in\R^{L\times K}$ with $|\mathcal{A}
  |=L\le K$ and denote the spectrum of $\mT_\mathcal{A}$ as
  $\lambda_1\ge\ldots\ge\lambda_K$.  Then the $\operatorname{MSE}$ is
  bounded according to the $\operatorname{FP}$ as,
\begin{align}
  &\operatorname{MSE}(\mPsi_\mathcal{A}) \le \frac {K}{L_\text{MIN}}
  \frac{\operatorname{FP}(\mPsi_\mathcal{A})}{{\lambda_K}^2}, \label{eq:MSEmax}\\
  &\operatorname{MSE}(\mPsi_\mathcal{A})\ge \frac {K}{L_\text{MAX}}
  \frac{\operatorname{FP}(\mPsi_\mathcal{A})}{{\lambda_1}^2},\label{eq:MSEmin}
\end{align}
where $L_\text{MIN}$ and $L_\text{MAX}$ are defined as in
\eqref{eq:Lmin} and \eqref{eq:Lmax}.
\label{lemma:MSE}
\end{lemma}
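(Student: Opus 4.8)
The plan is to reduce everything to a purely scalar inequality about the eigenvalues $\lambda_1 \ge \cdots \ge \lambda_K$ of $\mT_\mathcal{A}$, since all three quantities appearing in the statement are spectral: $\operatorname{MSE}(\mPsi_\mathcal{A}) = \sigma^2 \sum_k 1/\lambda_k$ (I would drop or absorb the $\sigma^2$, as the paper's own formula \eqref{eq:MSE} does when stating things proportionally), $\operatorname{FP}(\mPsi_\mathcal{A}) = \sum_k \lambda_k^2$, and $\operatorname{Trace}(\mT_\mathcal{A}) = \sum_k \lambda_k = L_\mathcal{A}$, with $L_\text{MIN} \le L_\mathcal{A} \le L_\text{MAX}$. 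So the target is: bound the harmonic-type sum $\sum_k 1/\lambda_k$ above and below in terms of $\sum_k \lambda_k^2$, $\sum_k \lambda_k$, and the extreme eigenvalues $\lambda_1, \lambda_K$.

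The key tool, as the text announces, is the inequality from \cite{Sharma:2008ur} relating the variance, arithmetic mean (AM), and harmonic mean (HM) of a finite set of positive numbers lying in a bounded range. Concretely, for positive reals $x_1,\dots,x_K \in [m, M]$ with $\mathrm{AM} = \frac1K\sum x_k$ and $\mathrm{HM} = K / \sum (1/x_k)$, one has a sandwich of the form
\begin{align}
\frac{1}{m M}\,\mathrm{Var}(x) \;\le\; \frac{1}{\mathrm{HM}} - \frac{1}{\mathrm{AM}} \;\le\; \frac{1}{m M}\,\mathrm{Var}(x), \nonumber
\end{align}
or the appropriate one-sided versions, where $\mathrm{Var}(x) = \frac1K\sum x_k^2 - \mathrm{AM}^2$. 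I would apply this with $x_k = \lambda_k$, $m = \lambda_K$, $M = \lambda_1$. Then $\sum_k 1/\lambda_k = K/\mathrm{HM}$, and rearranging gives $\sum_k 1/\lambda_k$ pinched between $K/\mathrm{AM} \pm (K/\mathrm{AM})\cdot(\text{something})$; substituting $\mathrm{AM} = L_\mathcal{A}/K$ and $\mathrm{Var} = \operatorname{FP}(\mPsi_\mathcal{A})/K - (L_\mathcal{A}/K)^2$ turns this into an expression in $\operatorname{FP}$, $L_\mathcal{A}$, $\lambda_1$, $\lambda_K$. After clearing denominators one should see the combination $\operatorname{FP}/\lambda_K^2$ (for the upper bound) and $\operatorname{FP}/\lambda_1^2$ (for the lower bound) emerge, and replacing $L_\mathcal{A}$ by $L_\text{MIN}$ in the denominator of the upper bound (which only increases the bound) and by $L_\text{MAX}$ in the lower bound (which only decreases it) yields exactly \eqref{eq:MSEmax} and \eqref{eq:MSEmin}.

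The main obstacle I anticipate is purely bookkeeping: matching the precise form of the Sharma et al. inequality to the clean ratio $\frac{K}{L_\text{MIN}}\frac{\operatorname{FP}}{\lambda_K^2}$ without slack that is too loose to be useful. The raw AM–HM–variance bound produces $\sum 1/\lambda_k \le \frac{K}{\mathrm{AM}} + \frac{\mathrm{Var}}{\lambda_1 \lambda_K \cdot \mathrm{AM}}\cdot(\text{factor})$-type expressions, and one has to verify that the two terms can be consolidated into a single term of the stated shape — most likely by the crude but sufficient estimates $\mathrm{AM} \ge \lambda_K$, $\mathrm{AM} \le \lambda_1$, $\frac1K\sum \lambda_k^2 \le \operatorname{FP}/K$ applied selectively, together with $\operatorname{FP} = \sum \lambda_k^2 \ge \mathrm{AM}^2 K$ to drop the negative variance part. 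A secondary (minor) point is the hypothesis $L \le K$ in the lemma statement, which looks like it should be $L \ge K$ for $\mT_\mathcal{A}$ to be full rank (so that $\lambda_K > 0$ and the MSE is finite); I would treat $\lambda_K > 0$ as the operative assumption and not dwell on the typo. Since the proof is deferred to Appendix~\ref{app:bound_MSE}, the expectation is that it is a few lines of scalar manipulation once the right form of the reference inequality is quoted.
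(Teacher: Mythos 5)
Your plan follows the paper's proof essentially verbatim: reduce everything to the eigenvalues, apply the Sharma AM/HM inequality with $m=\lambda_K$, $M=\lambda_1$, substitute $\mathrm{AM}=L_\mathcal{A}/K$ and $S^2=\operatorname{FP}(\mPsi_\mathcal{A})/K-(L_\mathcal{A}/K)^2$, absorb the residual terms exactly as you anticipate via $\lambda_K\le L_\mathcal{A}/K\le\lambda_1$, and pass to $L_\text{MIN}$ and $L_\text{MAX}$ by monotonicity in $L_\mathcal{A}$. The only cosmetic discrepancies are the precise form of the quoted inequality (the paper uses $\frac{(M-S)^2}{M(M-2S)}\le \mathrm{AM}/\mathrm{HM}\le \frac{(m+S)^2}{m(m+2S)}$ with $S$ the standard deviation, then removes the mixed term in each denominator), and you are right that the stated hypothesis $L\le K$ is effectively a typo for $L\ge K$ (one needs $\lambda_K>0$).
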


Lemma \ref{lemma:MSE} is key to study the approximation factor with
regards to the $\operatorname{MSE}$. Specifically, it allows to
analyze the two extremal cases:
\begin{itemize}
\item Given the optimal $\operatorname{FP}$, what is the lowest $\operatorname{MSE}$ we
  can achieve?
\item Given the worst case $\operatorname{FP}$ according to Theorem \ref{thm:FP}, what is
  the largest $\operatorname{MSE}$ we may encounter?
\end{itemize}

Lemma \ref{lemma:MSE} implies the necessity to properly bound the
spectrum of any $\mPsi_\mathcal{A}$ with a given $\operatorname{FP}$
obtained from $\mPsi$. While it is possible to bound $\lambda_1$ with
the $\operatorname{FP}$, it is also easy to build matrices with
$\lambda_K=0$, compromising the bound given in \eqref{eq:MSEmax}.
Therefore, we introduce the following property to control the
eigenvalues of any $\mT_\mathcal{A}$.

\begin{definition}[$(\delta,L)$-bounded frame] Consider a matrix
  $\mPsi\in\R^{N\times K}$ where $N\ge L$ and $N> K$. Then, we say
  that $\mPsi$ is $(\delta,L)$-bounded if, for every
  $\mathcal{A}\subseteq \mathcal{N}$ such that $|\mathcal{A}|=L$,
  $\mT_\mathcal{A}$ has a bounded spectrum
\begin{align}
  \frac{L_\text{MEAN}}{K}-\delta\le \lambda_i\le
  \frac{L_\text{MEAN}}{K}+\delta, \nonumber
\end{align}
where $1\le i\le K$, $\delta\ge 0$ and
$L_\text{MEAN}=\frac{L}{N}\sum_{i\in\mathcal{N}}\|\mpsi_i\|^2$ is
average value of $L_\mathcal{A}$.
\label{def:bounded_frame}
\end{definition}

The concept of $(\delta,L)$-bounded frames is similar to the notion of
matrices satisfying the restricted isometry property, that are used in
compressive sensing to guarantee the reconstruction of a sparse vector
from a limited number of linear measurements
\cite{Candes:2008pi}. Moreover, it allows us to define an
approximation factor for the $\operatorname{MSE}$ that does not depend
on the $\operatorname{FP}$, the cost-function we minimize.

\begin{figure*}[t!]  \centering
  \subfloat{\label{fig:a}\includegraphics[scale=0.450]{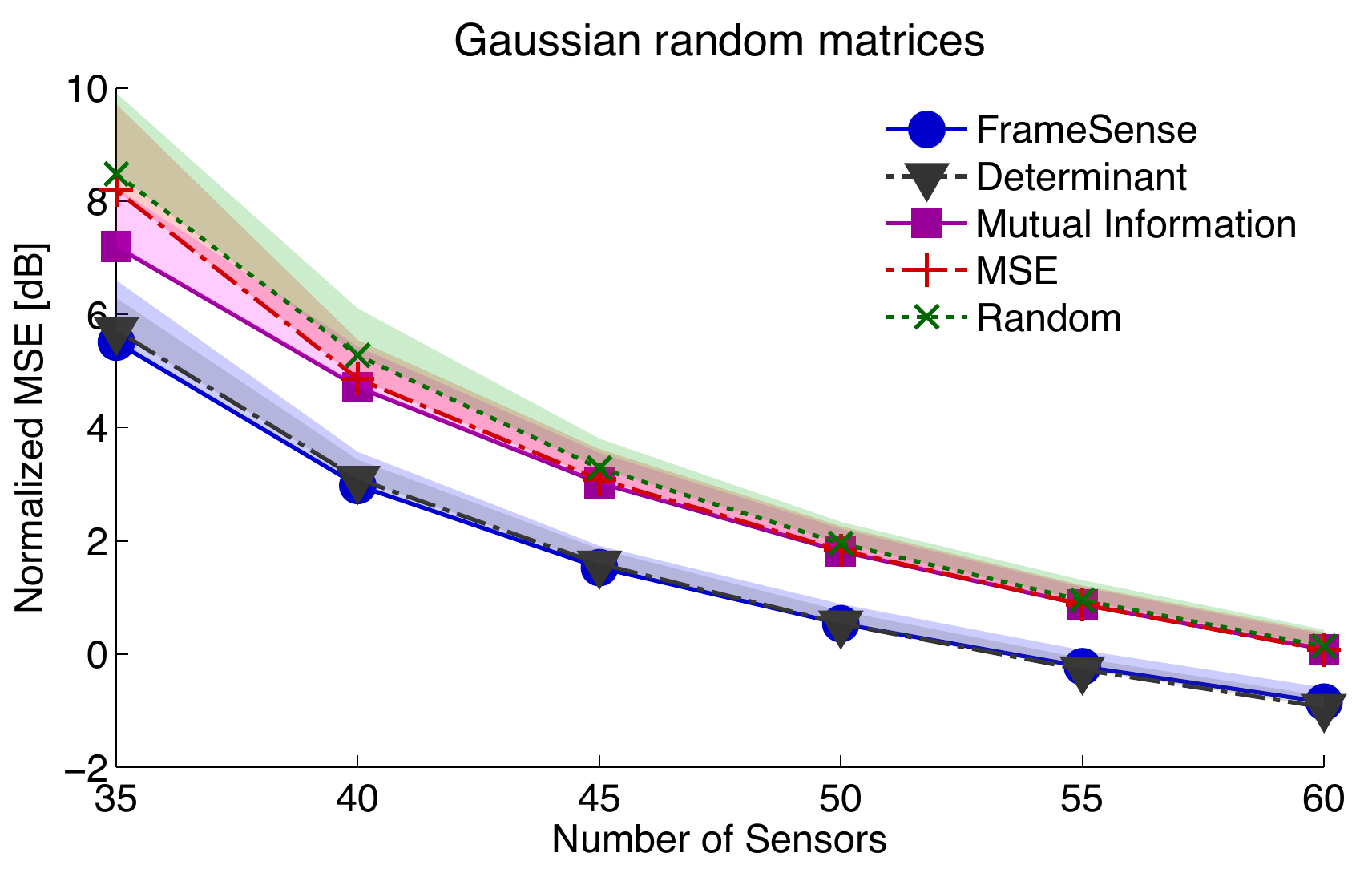}}
  \qquad
  \subfloat{\label{fig:b}\includegraphics[scale=0.45]{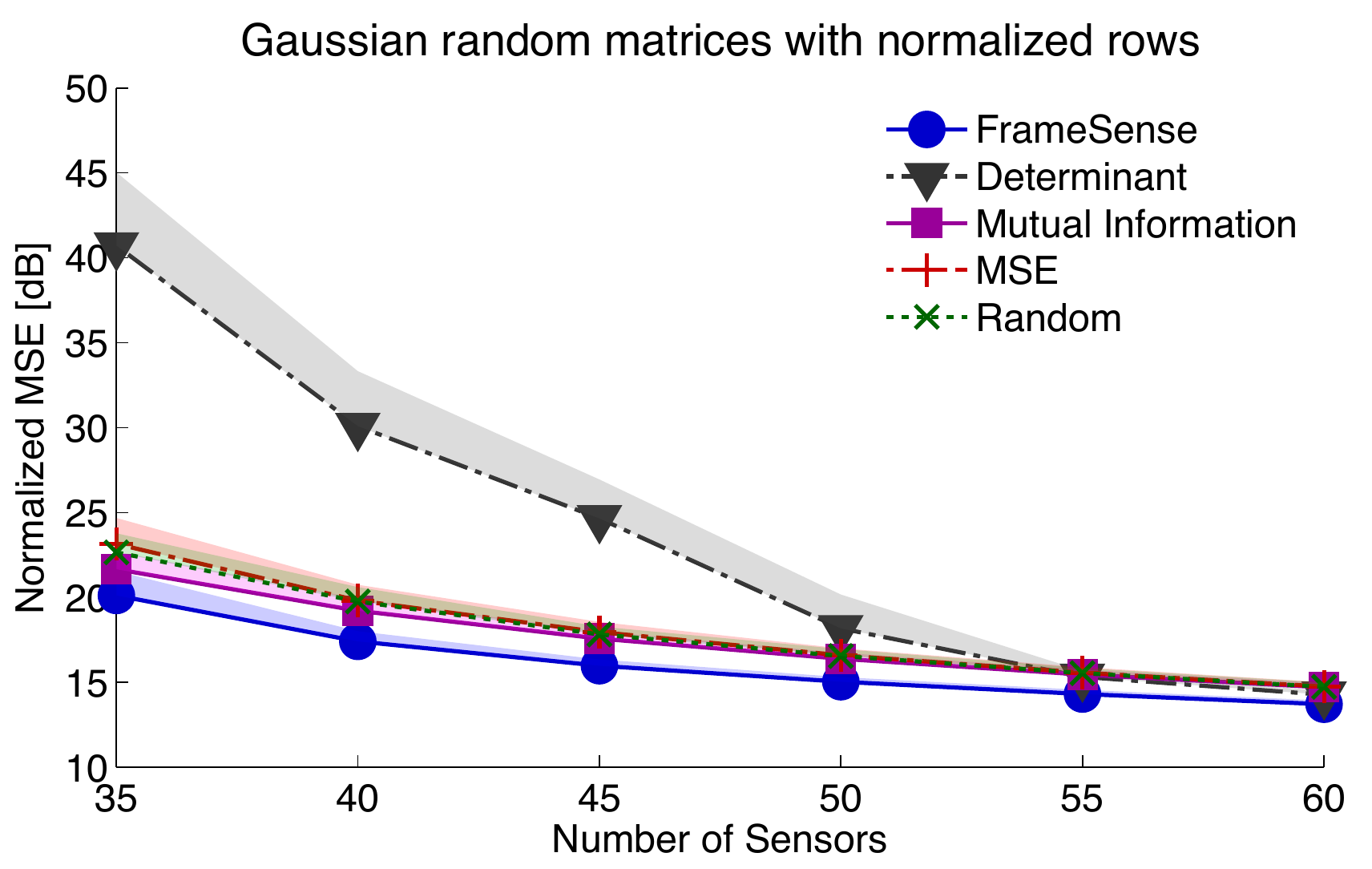}}\\
  \subfloat{\label{fig:c}\includegraphics[scale=0.45]{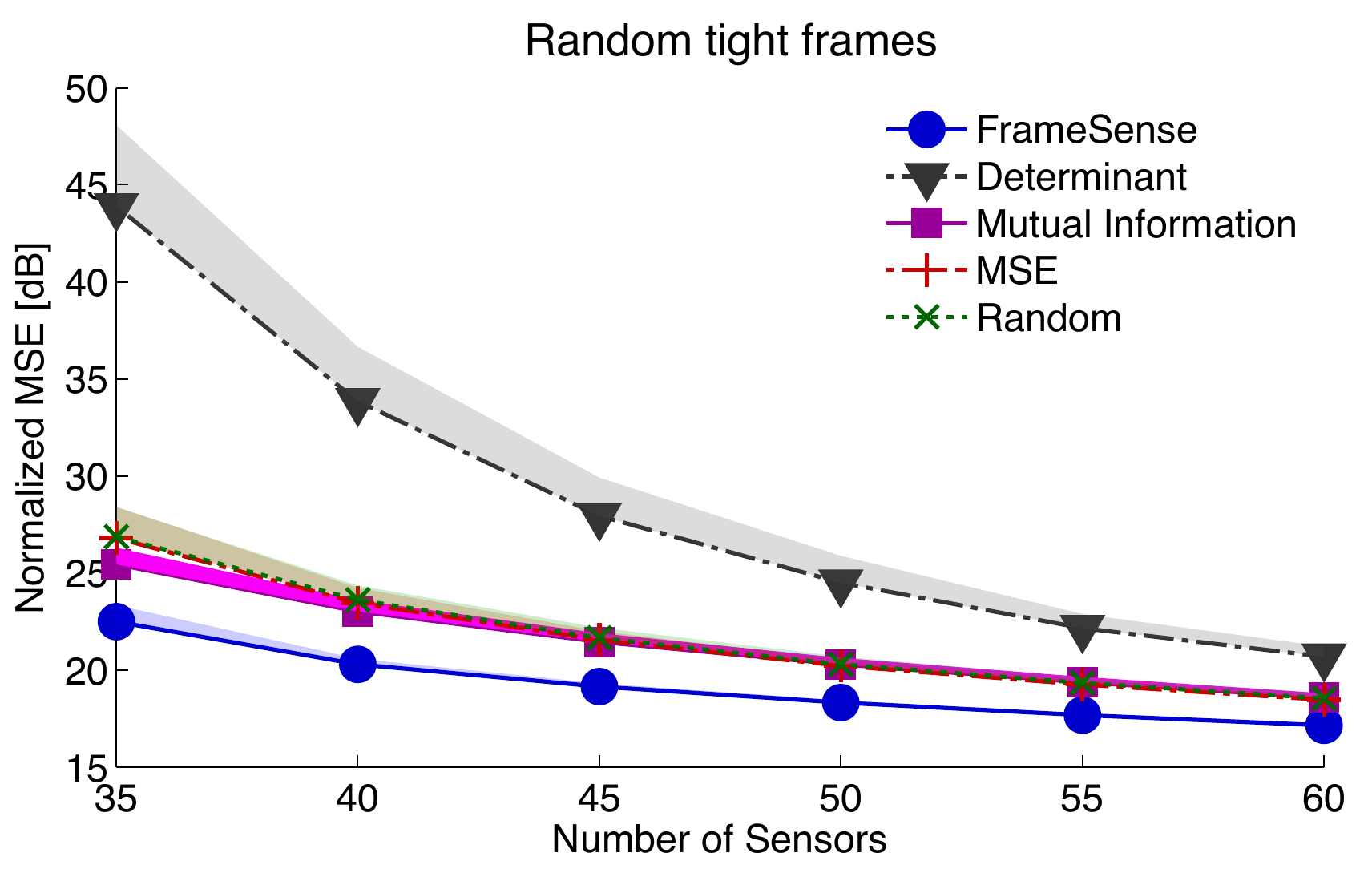}}
\qquad
  \subfloat{\label{fig:b}\includegraphics[scale=0.45]{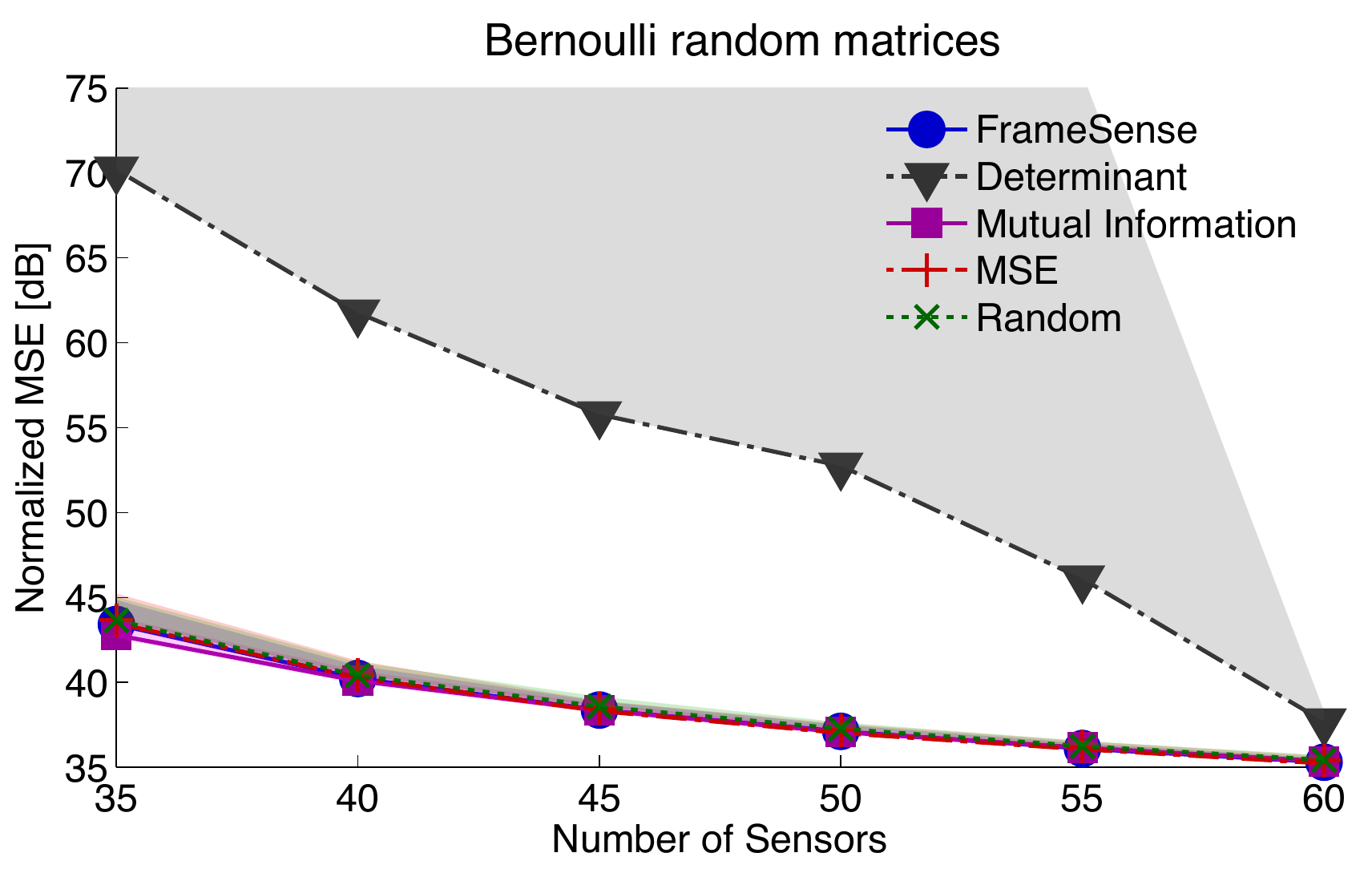}}
\vspace{-0.3cm}
\caption{Performance comparison between FrameSense and other greedy
  algorithms using common cost functions. We randomly
  generate matrices with $N=100$ and $K=30$ and test different greedy
  algorithms for a varying number of placed sensors $L$. The
  performance is measured in terms of $\operatorname{MSE}$, so the
  lower the curve, the higher the performance. The shaded areas
  represent the positive side of the error bars, measured using the
  standard deviation over 100 realizations. We consider four different
  types of sensing matrices, and in all cases FrameSense outperforms
  the other algorithms. We underline the consistency of FrameSense
  over the four types of matrices. }
  \label{fig:greedy_comp}
\vspace{-0.4cm}
\end{figure*}

\begin{figure}[t!]  \centering
  \includegraphics[scale=0.45]{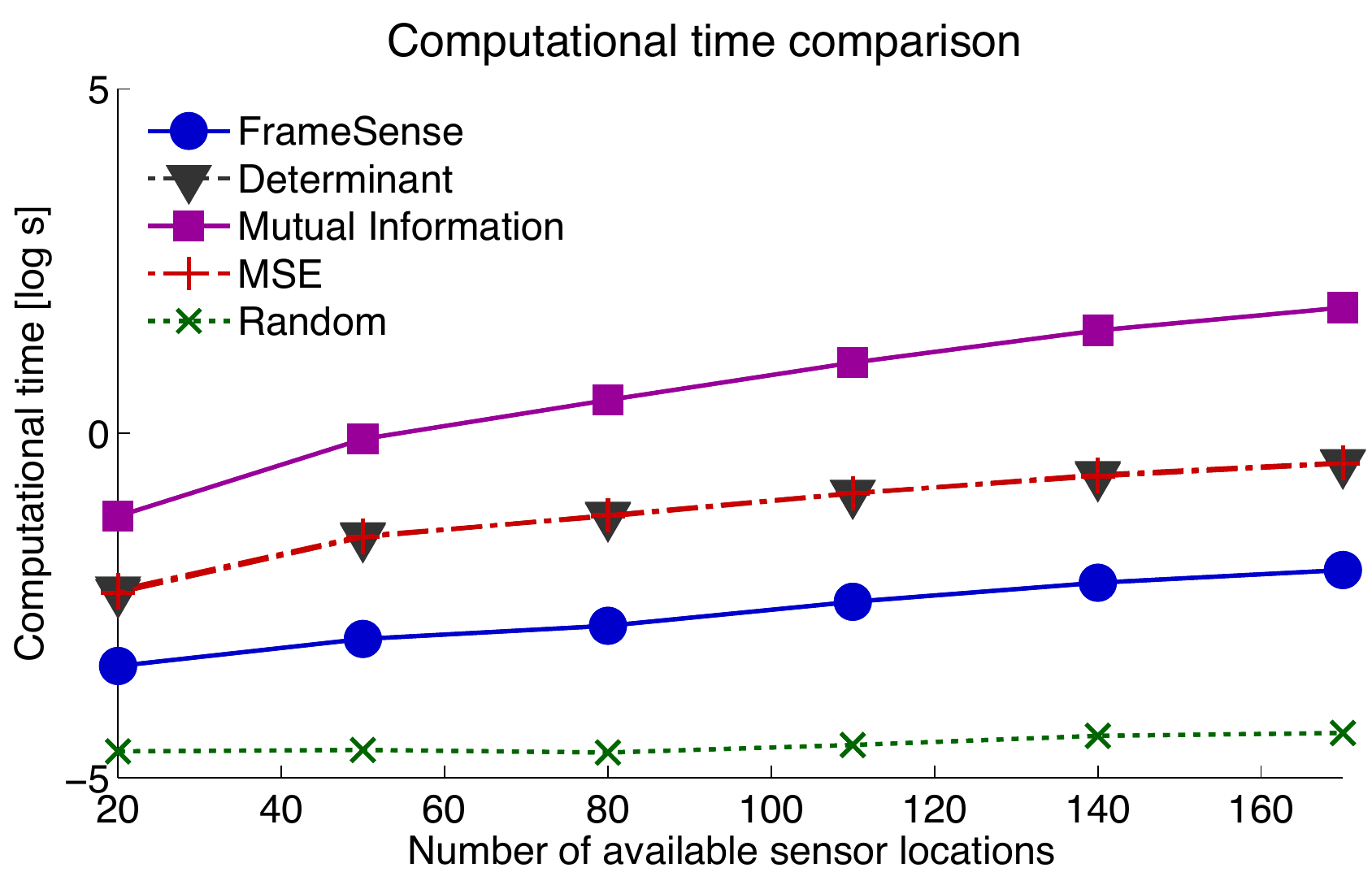}
\vspace{-0.3cm}
\caption{Comparison in terms of computational time as a function of
  $N$ between FrameSense and other greedy algorithms using commonly
  considered cost functions. The lower the curve the faster the
  algorithm. Note how FrameSense is the fastest algorithm, the
  exception being the random selection. Moreover, observe how the
  different algorithms scales equivalently with $N$. Errors bars are
  not shown because they are smaller than the markers.  }
  \label{fig:greedy_comp_time}
\vspace{-0.4cm}
\end{figure}

\newcommand{\slfrac}[2]{\left.#1\middle/#2\right.}

\begin{theorem}[MSE approximation factor for $(\delta,L)$-bounded frames]
  Consider a matrix $\mPsi\in\R^{N\times K}$ and $L\ge K$
  sensors. Assume $\mPsi$ to be a $(\delta,L)$-bounded frame, let $d$
  be the ratio $\slfrac{L_\text{MEAN}}{K}$ and define the optimal
  placement in terms of $\operatorname{MSE}$ as $\text{OPT}=\arg
  \min_{\mathcal{A}\in\mathcal{N},|\mathcal{A}|=L}
  \operatorname{MSE}(\mPsi_\mathcal{A})$. Then the solution
  $\mathcal{L}$ of FrameSense is near-optimal
  with regards to the $\operatorname{MSE}$,
\begin{align}
&\operatorname{MSE}(\mPsi_\mathcal{L})\le\eta\operatorname{MSE}(\mPsi_\text{OPT})
\text{ with } \eta=\gamma\frac{(d+\delta)^2}{(d-\delta)^2}\frac{L_\text{MAX}}{L_\text{MIN}}, \nonumber
\end{align}
where $\eta$ is the approximation factor of the $\operatorname{MSE}$
and $\gamma$ is the approximation factor of the $\operatorname{FP}$.
\label{thm:MSE}
\end{theorem}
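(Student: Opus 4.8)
The plan is to bound the $\operatorname{MSE}$ from above by the $\operatorname{FP}$ for the FrameSense output, bound the $\operatorname{FP}$ from above by the $\operatorname{MSE}$ for the optimum, and glue the two bounds together through the $\operatorname{FP}$ guarantee of Theorem~\ref{thm:FP}. Lemma~\ref{lemma:MSE} supplies exactly the two one-sided inequalities that are needed, and the $(\delta,L)$-bounded hypothesis is what lets us replace the data-dependent extreme eigenvalues $\lambda_1$ and $\lambda_K$ by the explicit constants $d+\delta$ and $d-\delta$. Throughout I would assume $\delta<d$, so that every $\mT_\mathcal{A}$ with $|\mathcal{A}|=L$ is nonsingular and all $\operatorname{MSE}$ values (and the factor $(d-\delta)^2$) are positive and finite; if $\delta\ge d$ the statement is vacuous.

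First I would apply the upper bound \eqref{eq:MSEmax} of Lemma~\ref{lemma:MSE} to $\mathcal{A}=\mathcal{L}$, the set returned by FrameSense, and use that $\mT_\mathcal{L}$ has $\lambda_K\ge d-\delta$ because $\mPsi$ is $(\delta,L)$-bounded, giving
\begin{align}
\operatorname{MSE}(\mPsi_\mathcal{L})\le\frac{K}{L_\text{MIN}}\,\frac{\operatorname{FP}(\mPsi_\mathcal{L})}{(d-\delta)^2}.\nonumber
\end{align}
Next, Theorem~\ref{thm:FP} gives $\operatorname{FP}(\mPsi_\mathcal{L})\le\gamma\,\operatorname{FP}(\mPsi_{\text{OPT}_{\text{FP}}})$, where $\text{OPT}_{\text{FP}}$ is the size-$L$ set of minimal $\operatorname{FP}$; since that set minimizes the $\operatorname{FP}$ over all size-$L$ subsets, in particular $\operatorname{FP}(\mPsi_{\text{OPT}_{\text{FP}}})\le\operatorname{FP}(\mPsi_\text{OPT})$ for the $\operatorname{MSE}$-optimal set $\text{OPT}$ of the present theorem, so $\operatorname{FP}(\mPsi_\mathcal{L})\le\gamma\,\operatorname{FP}(\mPsi_\text{OPT})$. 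Finally I would apply the lower bound \eqref{eq:MSEmin} of Lemma~\ref{lemma:MSE} to $\mathcal{A}=\text{OPT}$, using $\lambda_1\le d+\delta$ for $\mT_\text{OPT}$, to get $\operatorname{FP}(\mPsi_\text{OPT})\le\frac{L_\text{MAX}}{K}(d+\delta)^2\,\operatorname{MSE}(\mPsi_\text{OPT})$. Substituting the last two inequalities into the first yields
\begin{align}
\operatorname{MSE}(\mPsi_\mathcal{L})\le\frac{K}{L_\text{MIN}}\cdot\frac{\gamma}{(d-\delta)^2}\cdot\frac{L_\text{MAX}}{K}(d+\delta)^2\,\operatorname{MSE}(\mPsi_\text{OPT})=\gamma\,\frac{(d+\delta)^2}{(d-\delta)^2}\,\frac{L_\text{MAX}}{L_\text{MIN}}\,\operatorname{MSE}(\mPsi_\text{OPT}),\nonumber
\end{align}
which is precisely the claim with $\eta=\gamma\,(d+\delta)^2(d-\delta)^{-2}L_\text{MAX}/L_\text{MIN}$.

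The one step that needs genuine care is the switch of optima in the middle: Theorem~\ref{thm:FP} is phrased against the $\operatorname{FP}$-optimal set, whereas the present theorem is phrased against the $\operatorname{MSE}$-optimal set, and these are in general different subsets of $\mathcal{N}$. The argument goes through only because the $\operatorname{FP}$-minimizer has, by definition, $\operatorname{FP}$ no larger than that of the $\operatorname{MSE}$-minimizer, so the quality guarantee of Theorem~\ref{thm:FP} transfers automatically. A secondary point to flag explicitly is that the $(\delta,L)$-bounded property is invoked simultaneously on $\mathcal{L}$ and on $\text{OPT}$; this is legitimate because the property is quantified over \emph{every} size-$L$ subset of $\mathcal{N}$, and it is also the place where the hypothesis $\delta<d$ is used to keep the bounds meaningful. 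Once Lemma~\ref{lemma:MSE} and Theorem~\ref{thm:FP} are in hand, nothing beyond these bookkeeping observations and routine substitution is required.
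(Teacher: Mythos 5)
Your proof is correct and follows essentially the same route as the paper: apply the upper bound \eqref{eq:MSEmax} to the FrameSense output and the lower bound \eqref{eq:MSEmin} to the optimal set, replace $\lambda_K$ and $\lambda_1$ by $d-\delta$ and $d+\delta$ via the $(\delta,L)$-bounded property, and chain through the $\operatorname{FP}$ guarantee of Theorem~\ref{thm:FP}. In fact you handle the one delicate step more explicitly than the paper does—passing from the $\operatorname{FP}$-optimal set to the $\operatorname{MSE}$-optimal set via $\operatorname{FP}(\mPsi_{\text{OPT}_{\text{FP}}})\le\operatorname{FP}(\mPsi_\text{OPT})$, and using the lower bound on $\operatorname{MSE}(\mPsi_\text{OPT})$ in the correct direction—whereas the paper compresses this into a brief monotonicity remark.
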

\begin{IEEEproof}
  First, we compute the worst case $\operatorname{MSE}$ when
  FrameSense yields the worst $\operatorname{FP}$, that is for
  $\operatorname{FP}(\mPsi_\calL)=\gamma
  \operatorname{FP}(\mPsi_\text{OPT})$. Using the upper bound
  \eqref{eq:MSEmax} and the bounds on the spectrum for
  $(\delta,L)$-bounded frames, we have
\begin{align}
\operatorname{MSE}(\mPsi_\calL)
\le\frac{K}{L_\text{MIN}}\frac{\gamma \operatorname{FP}(\mPsi_\text{OPT})}{(d-\delta)^2}.
\label{eq:upper_bound}
\end{align}
Then, we compute the best case $\operatorname{MSE}$ when the
$\operatorname{FP}$ is optimal. We note that the lower bound
\eqref{eq:MSEmin} of the $\operatorname{MSE}$ is monotonically
decreasing with regards to the $\operatorname{FP}$. Therefore, we use
the same strategy considered for the lower bound, and obtain
\begin{align}
\operatorname{MSE}(\mPsi_\text{OPT})\le
\frac{K}{L_\text{MAX}}\frac{\operatorname{FP}(\mPsi_\text{OPT})}{(d+\delta)^2}.
\label{eq:lower_bound}
\end{align}
Note that we consider that the optimal $\operatorname{MSE}$ is
achieved for the optimal $\operatorname{FP}$ because the lower bound
of the $\operatorname{MSE}$ is monotonically decreasing with regards
to $\operatorname{FP}.$ Finally, we compute the $\operatorname{MSE}$
approximation ratio as the ratio between \eqref{eq:upper_bound} and
\eqref{eq:lower_bound}, obtaining the desired result.
\end{IEEEproof}

The definition of $(\delta,L)$-bounded frames is key to the proof. It
turns out that many families of adequately normalized random matrices
satisfy Definition \ref{def:bounded_frame}, but it is hard to build
deterministic matrices with such property. Nonetheless, FrameSense
works well even for $\mPsi$ that are not provably
$(\delta,L)$-bounded. The same happens for compressed sensing and RIP
matrices \cite{Monajemi:2013fq}.

\subsection{Practical considerations on FrameSense}
\label{sec:practical}

One point of FrameSense that needs improvement is the optimization of
the sensing energy $L_\mathcal{L}$. In fact, the $\operatorname{FP}$
tends to discard the rows having a larger norm, which in theory could
be more relevant to minimize the $\operatorname{MSE}$. As an example,
consider a matrix $\mPsi$ built as follows,
\begin{align}
\mPsi=\begin{bmatrix}
\mPsi_0\\
C\mPsi_0
\end{bmatrix},
\end{align} where $\mPsi_0$ is a generic matrix and $C>1$ is a
constant. FrameSense would sub-optimally pick rows from the first
matrix, discarding the ones from the second matrix and creating a
sub-optimal sensor placement. In fact, the second matrix would have a
lower $\operatorname{MSE}$ thanks to the multiplicative constant
$C$. To limit such phenomenon, we optimize the sensor location on
$\mPsi'$, that is a $\mPsi$ with unit-norm rows. This solution is not
perfect: it removes the negative bias introduced by the norm of
$\mpsi_i$, but it does not exploit the sensing energy to improve the
sensor placement $\mathcal{L}$. We leave to future work the study of a
new $\operatorname{FP}$-based algorithm able to exploit the
information contained in the norm of the rows.

We conclude this section with a quantification of the bounds given in
Theorem \ref{thm:FP} and Theorem \ref{thm:MSE} in a simple
scenario. Consider a matrix $\mPsi\in\C^{N\times K}$ filled with
i.i.d. Gaussian complex random variables with zero mean and variance
$\sigma^2=1/K$. Assume that $L=c_1K$ and $N=c_2K$, with
$c_2>1>c_1$. Then, we have the following inequalities verified in
expectation,
\begin{align}
&\|\mpsi_i\|_2=1\quad\forall i,\nonumber\\
&\frac{K}{{L_\text{MIN}}^2}=\frac{1}{{c_1}^2K},\nonumber\\
&\operatorname{FP}(\mPsi)={c_2}^2{c_1}^2K.\nonumber
\end{align}
According to Theorem \ref{thm:FP}, the approximation factor of the
$\operatorname{FP}$ is equal to $\gamma=1+\frac{c_2^2-1}{e}$. We
assume that $K$ is sufficiently large and we consider the
Marchenko-Pastur law \cite{Couillet:2013df} to compute the following
approximated bounds for the eigenvalues of $\mPsi_{\calL}$,
\begin{align}
d-\delta=\sqrt{\frac{1}{c_1}}\left(1-\sqrt{c_1}\right)^2\le\lambda_i\le \sqrt{\frac{1}{c_1}}\left(1+\sqrt{c_1}\right)^2=d+\delta.\nonumber
\end{align}
In this scenario, the $\operatorname{MSE}$ approximation factor is equal
to,
\begin{align}
\eta=\left(1+\frac{{c_2}^2-1}{e}\right)\left(\frac{1+\sqrt{c_1}}{1-\sqrt{c_1}}\right)^4.\nonumber 
\end{align}
For example, if $c_1=0.25$ and $c_2=6$, then we have $\gamma\approx14$
and $\eta\approx 50$.

\section{Numerical Results}
\label{sec:num_exp}
In this section, we analyze the performance of FrameSense and compare
it with state-of-the-art algorithms for sensor placement.

\subsection{Synthetic data}
First, we compare the $\operatorname{FP}$ with other cost functions
when used in a naive greedy algorithm. Among the ones listed in Section
\ref{sec:literature}, we select the following three cost functions:
mutual information \cite{Krause:2008vo}, determinant of
$\mT_\mathcal{L}$ \cite{Shamaiah:2010hj}, and $\operatorname{MSE}$
\cite{Das:2008uc}.  We also consider an algorithm that randomly places
the sensors to relate the obtained results to a random selection.
\begin{figure*}[t!]  \centering
  \subfloat{\label{fig:a}\includegraphics[scale=0.45]{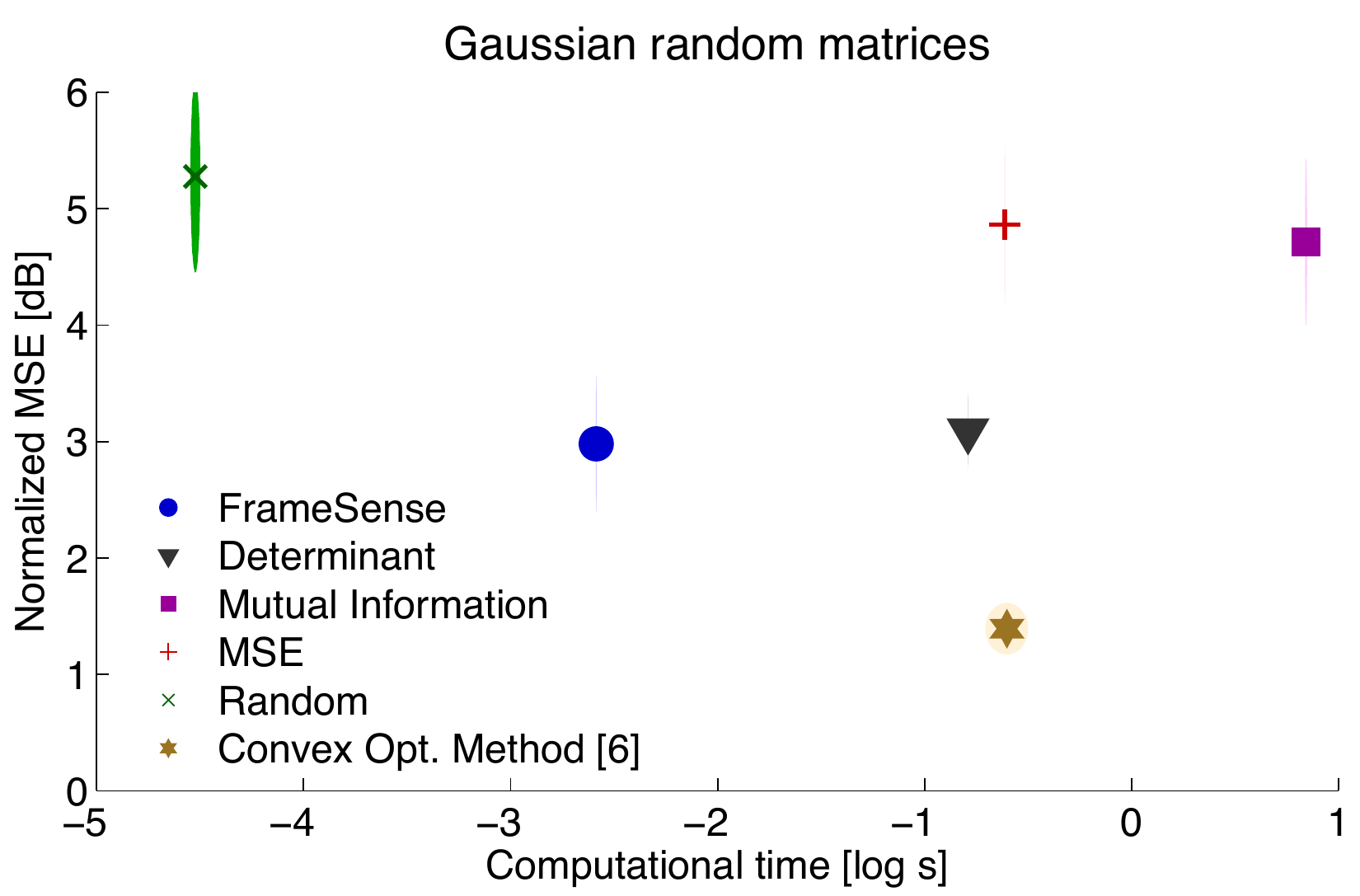}}
  \qquad
  \subfloat{\label{fig:b}\includegraphics[scale=0.45]{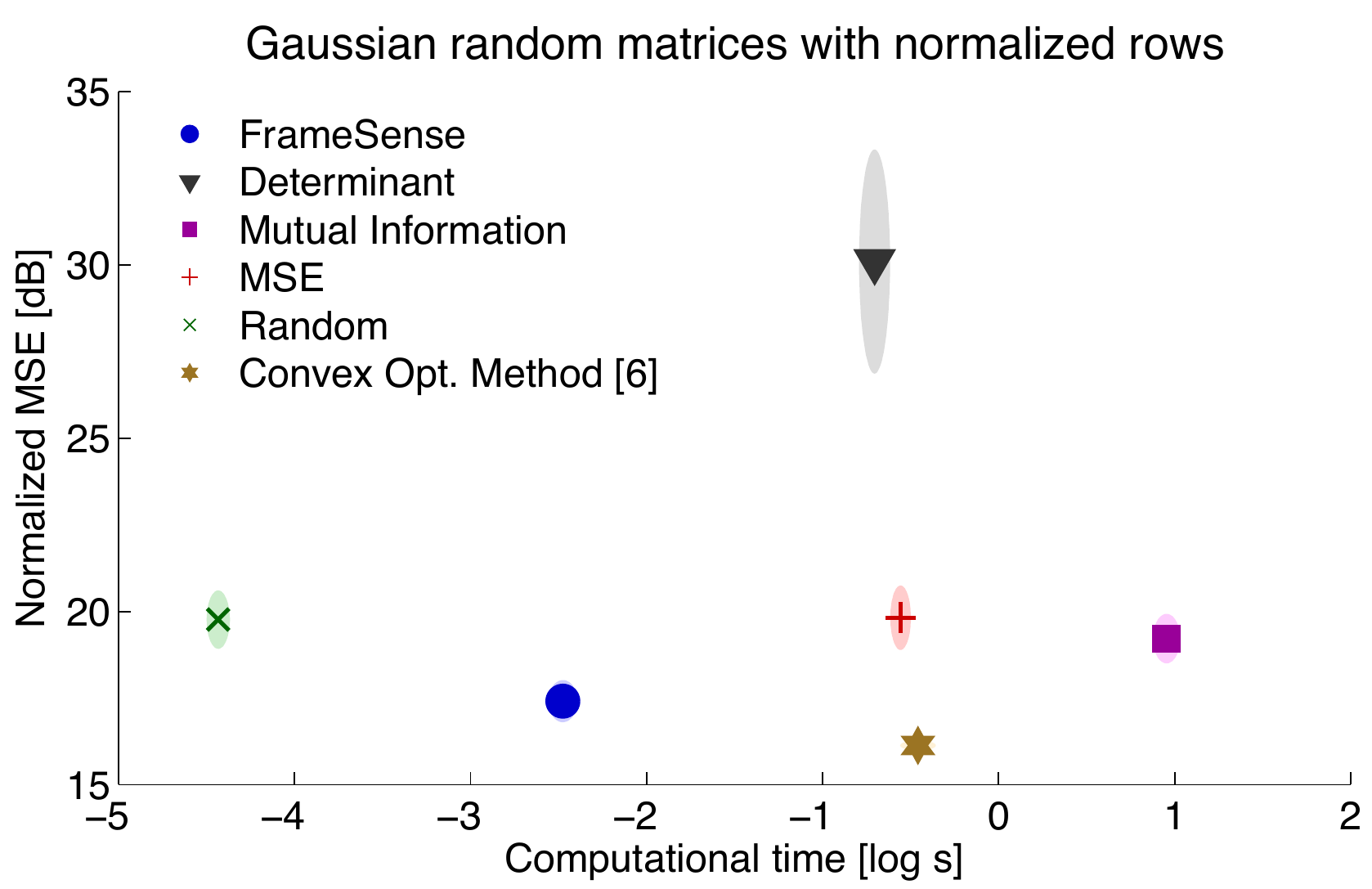}}\\
  \subfloat{\label{fig:c}\includegraphics[scale=0.45]{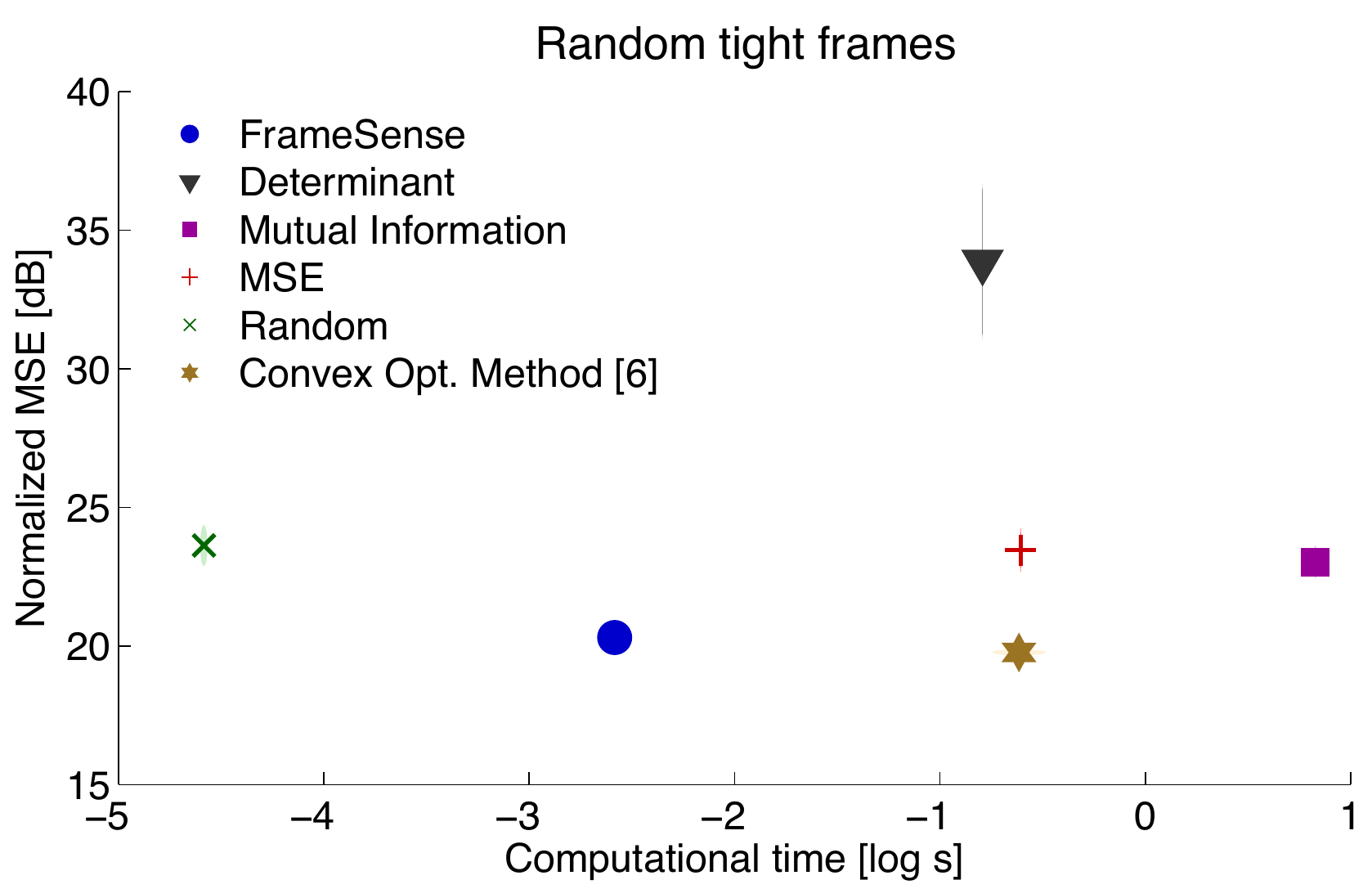}}
\qquad
  \subfloat{\label{fig:b}\includegraphics[scale=0.45]{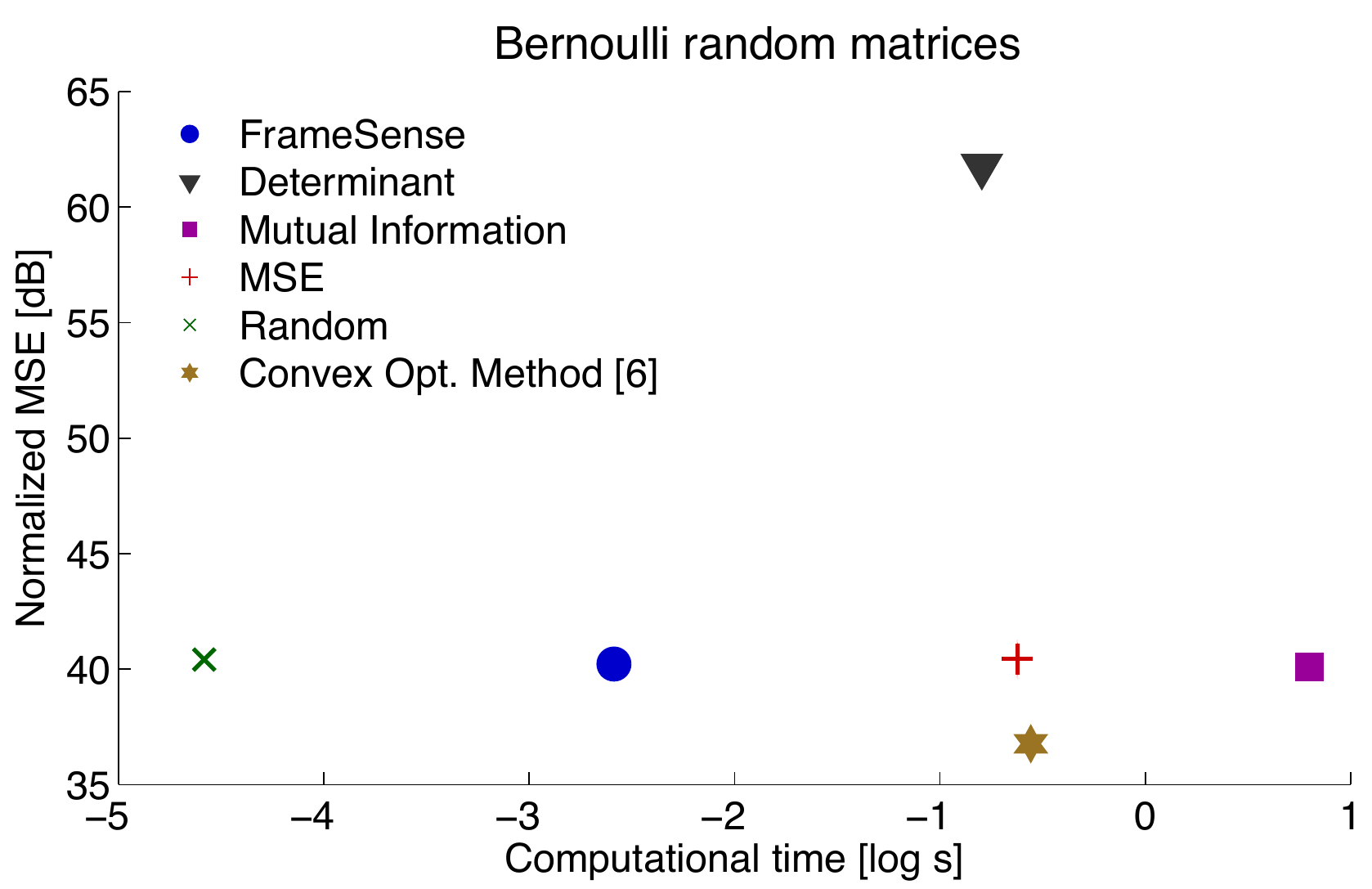}}
\vspace{-0.3cm}
\caption{Tradeoff between computational time and
  $\operatorname{MSE}$. We randomly generate matrices---according to
  four different models---with $N=100$, $K=30$ and $L=50$ sensors. The
  performance is measured in terms of $\operatorname{MSE}$, therefore
  the lower the dot, the higher the performance. On the other hand,
  the computational time is measured in seconds.  The error bars are
  represented as ellipsoids, where the length of the axes represent
  the standard deviation of the data point. Note that, FrameSense is
  the fastest algorithm by one order of magnitude and it is the second
  best algorithm in terms of $\operatorname{MSE}$. When the legend
  refer to a cost function, we considered a naive greedy algorithm
  optimizing that cost function. On the other hand, we refer to the
  authors for specific algorithms implementing complicated schemes
  and/or heuristics. }
  \label{fig:comp_time}
\vspace{-0.4cm}
\end{figure*}

\begin{figure}[t!]
  \centering 
  \includegraphics[scale=0.45]{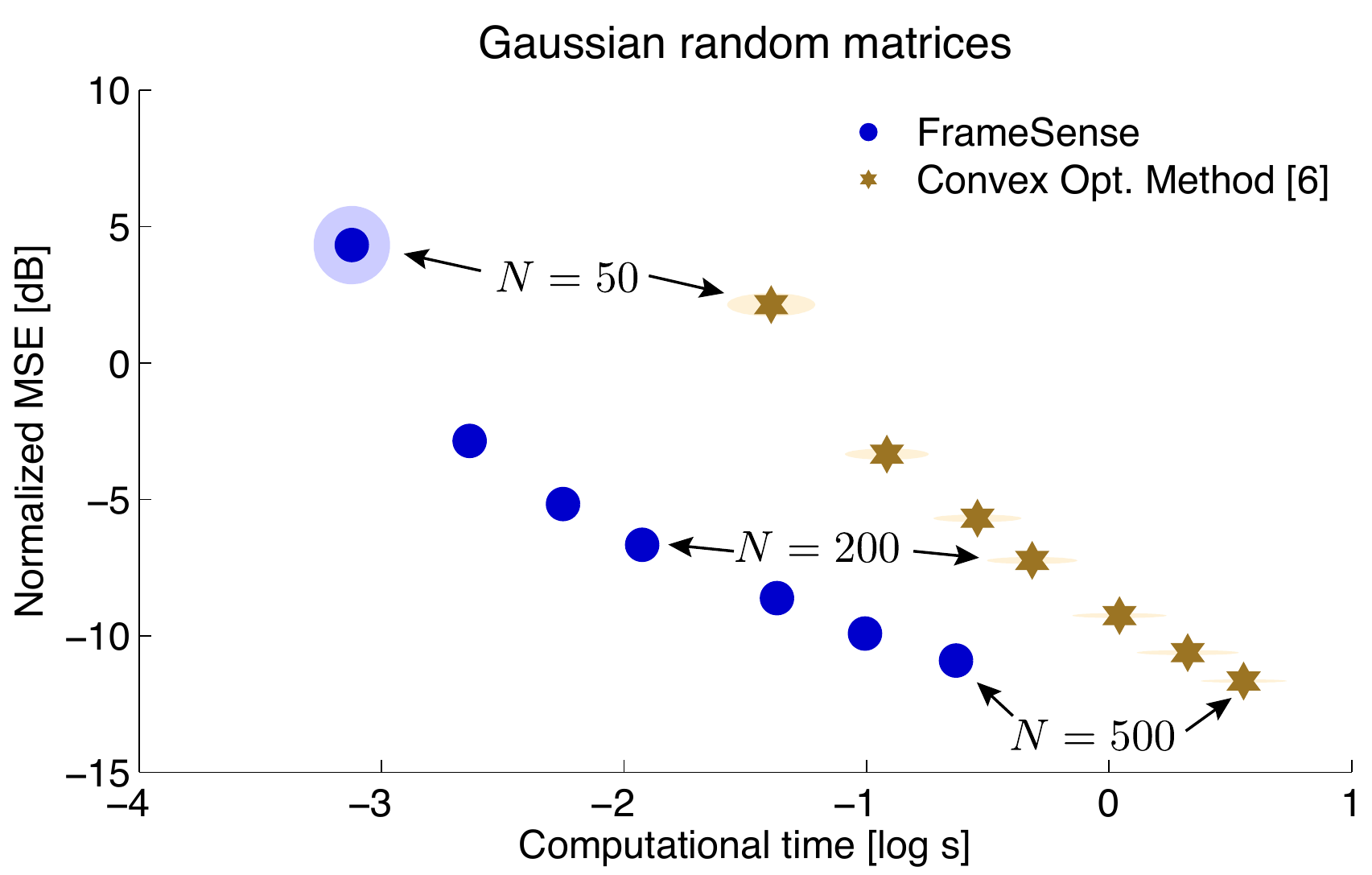}
\vspace{-0.3cm}
\caption{Analysis of the tradeoff between computational time and
  $\operatorname{MSE}$ for FrameSense and the convex relaxed algorithm
  proposed by Joshi et al. \cite{Joshi:2009el}. We generate 100
  Gaussian matrices with $K=20$ and of increasing size
  $N=\{50,100,150,200,300,400,500\}$, while we place $L=0.5N$
  sensors. The error bars are represented as ellipsoids, where the
  length of the axes represent the standard deviation of the data
  point.  We measure the average computational time together with the
  average $\operatorname{MSE}$, showing that while FrameSense is
  significantly faster than the convex algorithm, the difference in
  $\operatorname{MSE}$ is minimal. Moreover, the gap in the quality of
  the solution decreases for an increasing size of the problem $N$.  }
  \label{fig:comp_joshi}
\vspace{-0.5cm}
\end{figure}

The greedy algorithms are tested on different types of sensing
matrices $\mPsi$:
\begin{itemize}
\item random matrices with Gaussian i.i.d. entries,
\item random matrices with Gaussian i.i.d. entries whose rows are
  normalized,
\item random matrices with Gaussian i.i.d. entries with
  ortho-normalized columns, that we call random tight frame due to the
  Naimark theorem \cite{Kovacevic:2007fj},
\item random matrices with Bernoulli i.i.d entries.
\end{itemize}
Note that the use of random matrices is sub-optimal, since we would
rarely encounter such a case in a real-world scenario. However, it is
the only available \emph{dataset} that allows us to test thoroughly the different
algorithms.

We consider $\mPsi\in\R^{100\times30}$ and evaluate the performance in
terms of $\operatorname{MSE}$ for $L=\{35,35,40,45,50,55,60\}$. We use
100 different instances for each combination, and we compute the
average $\operatorname{MSE}$ as a function of $L$. Note that the
$\operatorname{MSE}$ is always computed using \eqref{eq:MSE}, which
assumes i.i.d. Gaussian noise perturbing the measurements $\vf$ and a
uniform distribution on $\vec{\alpha}$. The relatively small size of
$\mPsi$ and the low number of trials are due to the lack of
scalability of certain cost functions, which require the computation
of large matrices. The results are given in Figure
\ref{fig:greedy_comp}. We note that FrameSense is consistently
outperforming all other cost functions. In the random Gaussian
matrices case, the determinant shows similar results. However, looking
at the Bernoulli matrices case, we see that the determinant leads to a
significantly worse $\operatorname{MSE}$. Note that certain cost
functions show worse performance than a random selection of the
rows. While this phenomena could be partially explained by the special
properties of certain families of random matrices, it indicates the
importance of choosing a well-studied cost function for which we can
obtain performance bounds with respect to the $\operatorname{MSE}$.

According to the theory of random matrices, any selection of rows
should have the same spectrum on expectation. Therefore, it should not
be possible to outperform the random selection of rows. However, the
theoretical analysis of random matrices is generally valid only
asymptotically, while here we show results for relatively small
matrices. This phenomenon could also be of interest for other domains,
such as compressed sensing, indicating the possibility of
\emph{optimizing} the spectrum of random matrices to improve their
performance.

In Figure \ref{fig:greedy_comp_time}, we show the average
computational time with regards to the number of possible sensor
locations $N=\{20,50,80,\ldots,200\}$. We consider 100 random Gaussian
matrices $\mPsi\in\R^{N\times 10}$ and we place $0.5N$ sensors.  We
underline that FrameSense is significantly faster than any other
greedy algorithm, the only exception being the random selection that
has a computational time close to zero. Note that the other
parameters, such as as $L$ and $K$, have little influence on the
computational time, which strongly depends on $N$.

In a second experiment, we compare FrameSense with a state-of-the-art
method based on convex optimization \cite{Joshi:2009el}. Since the
algorithm proposed by Joshi et al. \cite{Joshi:2009el} is structurally
different from FrameSense, we focus this analysis on two parameters:
the computational time and the $\operatorname{MSE}$. We fix $K=20$ and
the ratio between the number of sensors and the number of available
locations as $L/N=0.5$. Then, we vary the number of possible locations
as $N=\{50,100,150,200,300,400,500\}$. The results for Gaussian random
matrices are given in Figure \ref{fig:comp_joshi}. First, we note that
the convex method achieves a lower $\operatorname{MSE}$, however the
performance gap decreases when we increase the number of sensors. This
is not surprising, since FrameSense is a greedy algorithm that does
not require parameter fine-tuning nor heuristics, while the convex
relaxation method integrates some efficient heuristics. For example,
at every iteration, it refines the selection by looking at all the
possible $\emph{swaps}$ between the chosen locations and the discarded
ones. This strategy is particularly effective when $L\approx K$: in
fact, swapping just one row can improve significantly the spectrum of
$\mT_\calL$, and consequently the $\operatorname{MSE}$ achieved by
$\mPsi_\mathcal{L}$. The heuristics, while effective in terms of
$\operatorname{MSE}$, increase also the computational cost. In fact,
FrameSense is significantly faster.

The last comparison also opens an interesting direction for future
work. In fact, the convex relaxed algorithm optimizes the determinant
of $\mT_\mathcal{L}$ and its near-optimality in terms of
$\operatorname{MSE}$ has not been shown. Moreover, this cost function
has been proven to be less effective compared to $\operatorname{FP}$
when used in a greedy algorithm (see Figure
\ref{fig:greedy_comp}). Therefore, we expect that a convex relaxed
scheme based on the $\operatorname{FP}$ has the potential to define a
new state of the art, mixing the advantages of $\operatorname{FP}$ and
the heuristics proposed in \cite{Joshi:2009el}.

To conclude the performance analysis, we study the trade-off between
computational complexity and performance for all the considered
algorithms, greedy and not. We picked 100 instances of each of the
random matrices proposed in the first experiment with $N=100$, $L=40$
and $K=30$. We measured the average computational time and average
$\operatorname{MSE}$ obtained by each algorithm and the results are
given in Figure \ref{fig:comp_time}. We note a general trend
connecting the four subfigures: FrameSense is the fastest algorithm,
by at least an order of magnitude, while its performance is just
second, as previously shown, to the convex relaxed method proposed by
Joshi et al. \cite{Joshi:2009el}.

Since the sensor placement is an off-line procedure, we may argue that
the computational time is of secondary importance. While this is true
in many applications, there are certain applications where it is
necessary to recompute $\mathcal{L}$ regularly. This is usually the
case when $\mPsi$ changes in time due to changes of the physical field
and it is possible to adaptively reallocate the sensors. In other
applications, such as the ones where we attempt to interpolate the
entire field from $L$ measurements, the number of possible locations
$N$ grows with the desired resolution. In this case, a lower
computational time is of critical importance.

\begin{figure*}[t!]
  \centering 
  \subfloat{\label{fig:tmaps_a}\includegraphics[scale=0.45]{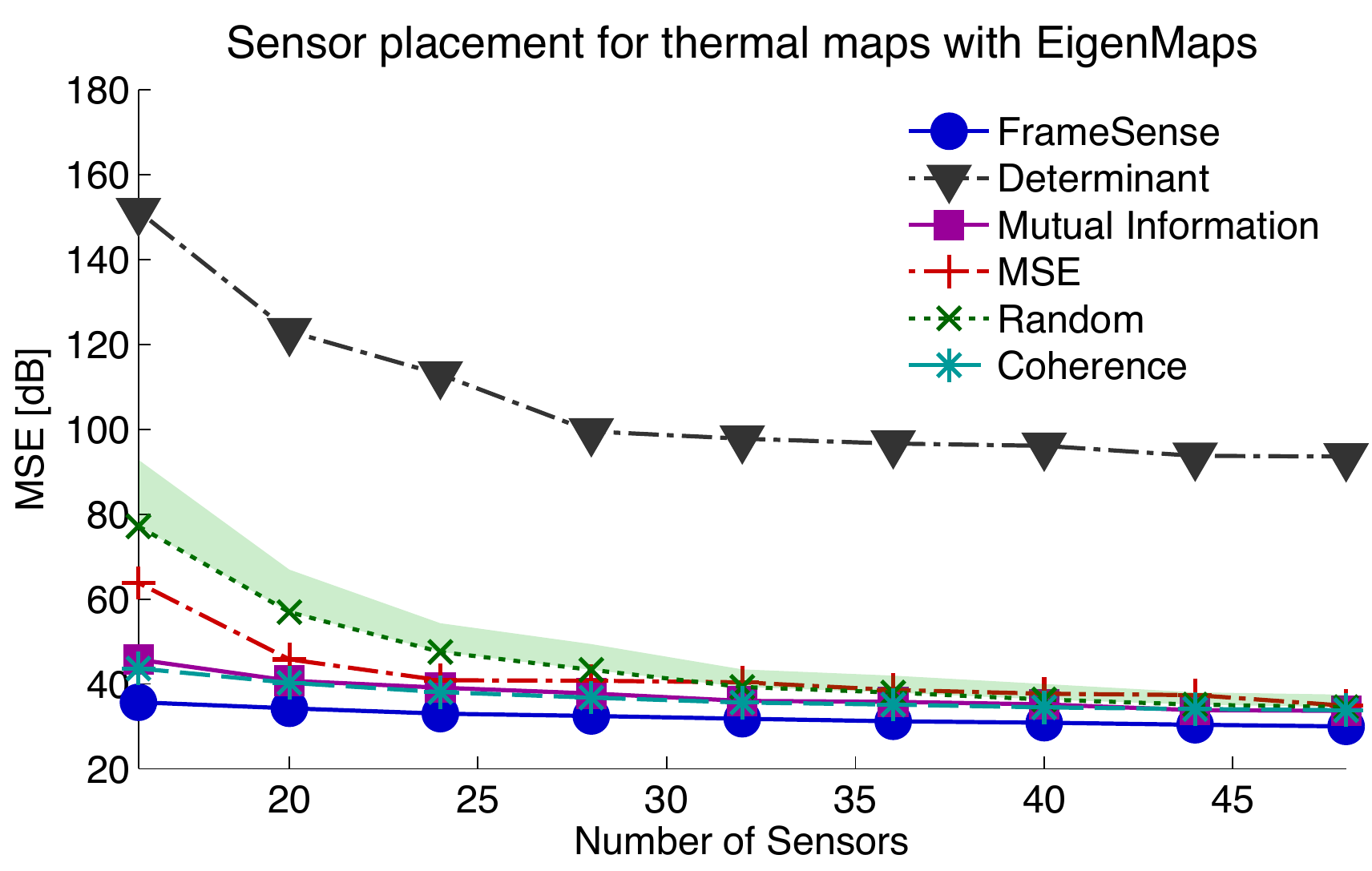}}
  \qquad
  \subfloat{\label{fig:tmaps_b}\includegraphics[scale=0.45]{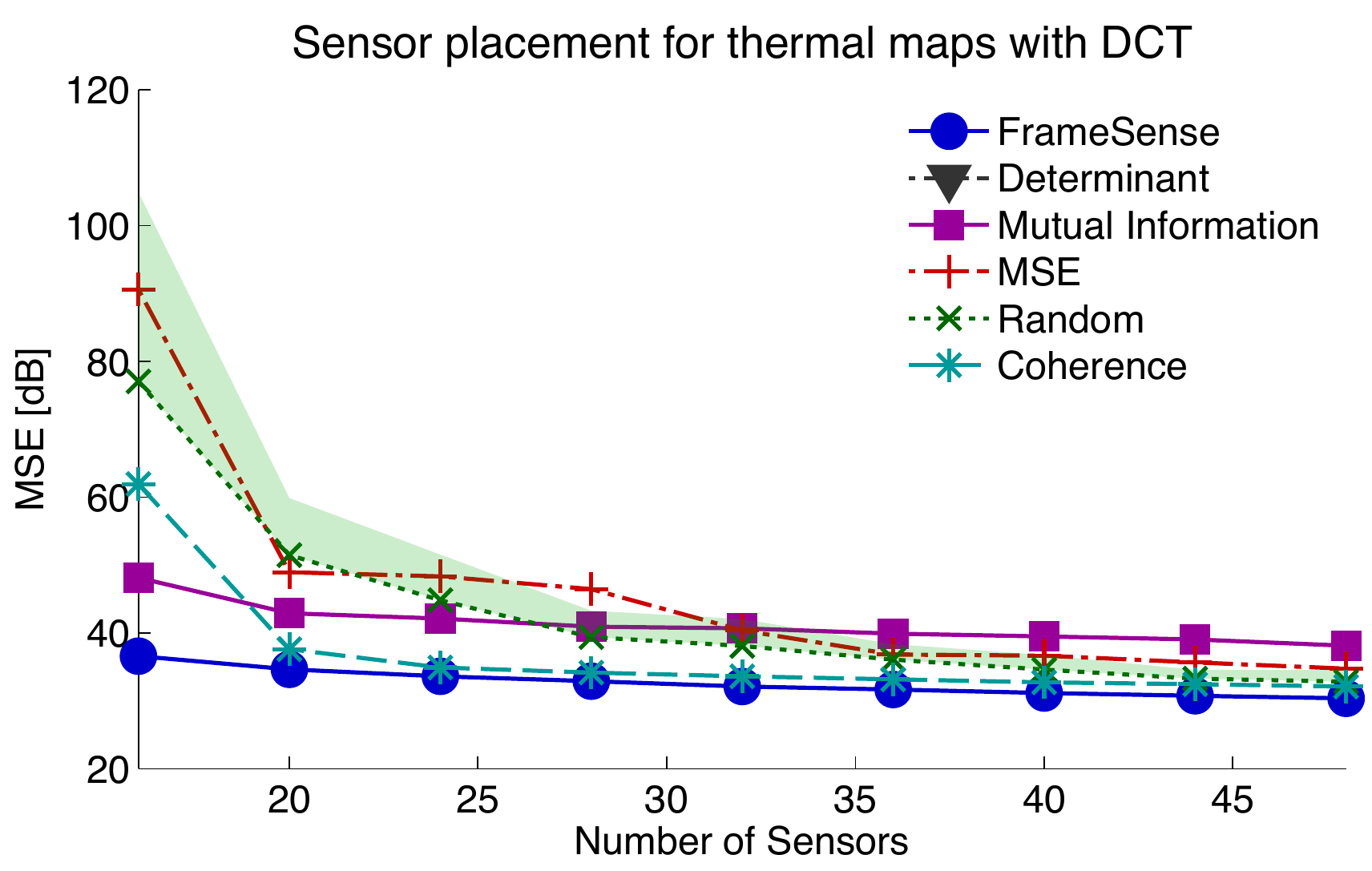}}
\vspace{-0.3cm}
\caption{Comparison between FrameSense, the other greedy algorithms
  and a coherence-based greedy algorithm proposed in
  \cite{Ranieri:2012fh}. In this experiment, we consider the sensors
  placement to estimate the temperature of an 8-core microprocessor
  using a limited number of sensors. Two different matrices are
  proposed: on the left, the matrix $\Psi\in\R^{420\times 16}$ is
  generated from a principal component analysis of known thermal maps
  as in \cite{Ranieri:2012fh}; on the right, the matrix
  $\Psi\in\R^{420\times 16}$ is the subsampled DCT matrix proposed in
  \cite{Nowroz:2010bt}. The shaded area represents the positive side
  of the error bar for the random sensor placement, measured using the
  standard deviation over 100 realizations. Note that FrameSense
  significantly outperforms the previous coherence-based method and
  the other greedy algorithms, in particular when the number of
  sensors is close to the number of estimated parameters, that is
  $K=16$. }
  \label{fig:tmaps}
\vspace{-0.3cm}
\end{figure*}

\subsection{Temperature estimation on many-core processors}
We now analyze the impact of FrameSense on a real-world problem where
sensor placement is of fundamental importance. We describe the
problem, followed up by a simulation showing the improvement with
respect to the state of the art.

The continuous evolution of process technology increases the
performance of processors by including more cores memories and complex
interconnection fabrics on a single chip. Unfortunately, a higher
density of components increases the power densities and amplifies the
concerns for thermal issues. In particular, it is key to design
many-core processors that prevent hot spots and large on-chip
temperature gradients, as both conditions severely affect the overall
system's characteristics. A non-exhaustive list of problems induced by
thermal stress includes higher failure rate, reduced performance,
increased power consumption due to current leakage and increased
cooling costs.  To overcome these issues, the latest designs include
the thermal information into the workload allocation strategy to
obtain optimal performance while avoiding critical thermal
scenarios. Consequently, a few sensors are deployed on the chip to
collect thermal data. However, their number is
limited by area/power constraints and their optimal placement, that
detects all the worst-case thermal scenarios, is still unresolved and
has received significant attention
\cite{Cochran:2009ba,Mukherjee:2006joa,Nowroz:2010bt,Reda:kj,Sharifi:ip}.

An improved sensor placement algorithm would lead to a reduction of
the sensing cost in terms of used silicon surface. Moreover, it
implies a reduction of the reconstruction error, making possible the
use of more aggressive scheduling strategies and consequently improves
the processor performance. In \cite{Ranieri:2012fh}, we proposed the
following strategy: learn $\mPsi$ using a principal component analysis
on an extensive set of simulated thermal maps, then place the sensors
with a greedy algorithm minimizing the coherence between the
rows. This work achieved a significant improvement compared to the
previous state-of-the-art approach \cite{Nowroz:2010bt} that was based
on specific knowledge of the spectrum of the thermal maps.

We consider a sensing matrix $\mPsi$ equivalent to one defined in
\cite{Ranieri:2012fh} but at a lower resolution for computational
reasons. More precisely, we consider thermal maps at a resolution of
$16\times 15$ and a matrix $\mPsi\in\{420\times 16\}$.  Then, we
compare FrameSense, our previous greedy algorithm from
\cite{Ranieri:2012fh} and the other algorithms considered in Figure
\ref{fig:greedy_comp}. The results are shown in Figure
\ref{fig:tmaps}. We note that the performance of the placement
algorithm has been further improved by FrameSense, without increasing
the computational cost or changing the reconstruction
strategy. Moreover, we are now able to guarantee the near-optimality
of the algorithm with respect to the $\operatorname{MSE}$ of the
estimated thermal map.

Note that we are only discussing the impact of an optimized sensor
placement when we consider the $\mPsi$ based on the Eigenmaps
described in \cite{Ranieri:2012fh}.  The joint problem of sensor
placement and reconstruction of thermal maps is more complex and other
factors may play a fundamental role. For example, it may be more
convenient to use an opportunely constructed DCT frame
\cite{Nowroz:2010bt} to reduce the memory occupation at the price of a
reduced reconstruction precision. The comparison of the reconstruction
performance in terms of $\operatorname{MSE}$ for the various
algorithms when considering the $\mPsi\in\R^{420\times16}$ based on
the DCT frame are given in Figure \ref{fig:tmaps}. Again, FrameSense
outperforms all the other placement algorithms, in particular when the
number of sensors is limited.

In Figure \ref{fig:tmaps2}, we present an example of a thermal map and
the two reconstructions obtained by sensor placements optimized by
either FrameSense or the coherence-based algorithm. Notably, the
reconstruction obtained with the sensor placement proposed by
FrameSense is significantly more precise.

\section{Conclusions}

We studied the optimization of sensor placement when the collected
measurements are used to solve a linear inverse problem. The problem
is equivalent to choosing $L$ rows out of $N$ from a matrix $\mPsi$
such that the resulting matrix has favorable spectral properties. The
problem is intrinsically combinatorial and approximation algorithms
are necessary for real-world scenarios. While many algorithms have
been proposed, none has guaranteed performance in terms of the
$\operatorname{MSE}$ of the solution of the inverse problem, which is
the key merit figure.

We proposed FrameSense, a greedy worst-out algorithm minimizing the
$\operatorname{FP}$. Even if this chosen cost function is well-known in
frame theory for its fundamental role in the construction of frames
with optimal $\operatorname{MSE}$, FrameSense is the first algorithm
exploiting it as a cost function for the sensor placement problem. Our
theoretical analysis demonstrates the following innovative aspects:
\begin{itemize}
\item FrameSense is near-optimal with respect to the
  $\operatorname{FP}$, meaning that it always places the sensors such
  that the obtained $\operatorname{FP}$ is guaranteed to be close to the
  optimal one.
\item under RIP-like assumptions for $\mPsi$, FrameSense is also
  near-optimal with respect to the $\operatorname{MSE}$. Note that
  FrameSense is the first algorithm with this important property.

\end{itemize}

\begin{figure}[t!]
  \centering 
  \includegraphics[scale=0.35]{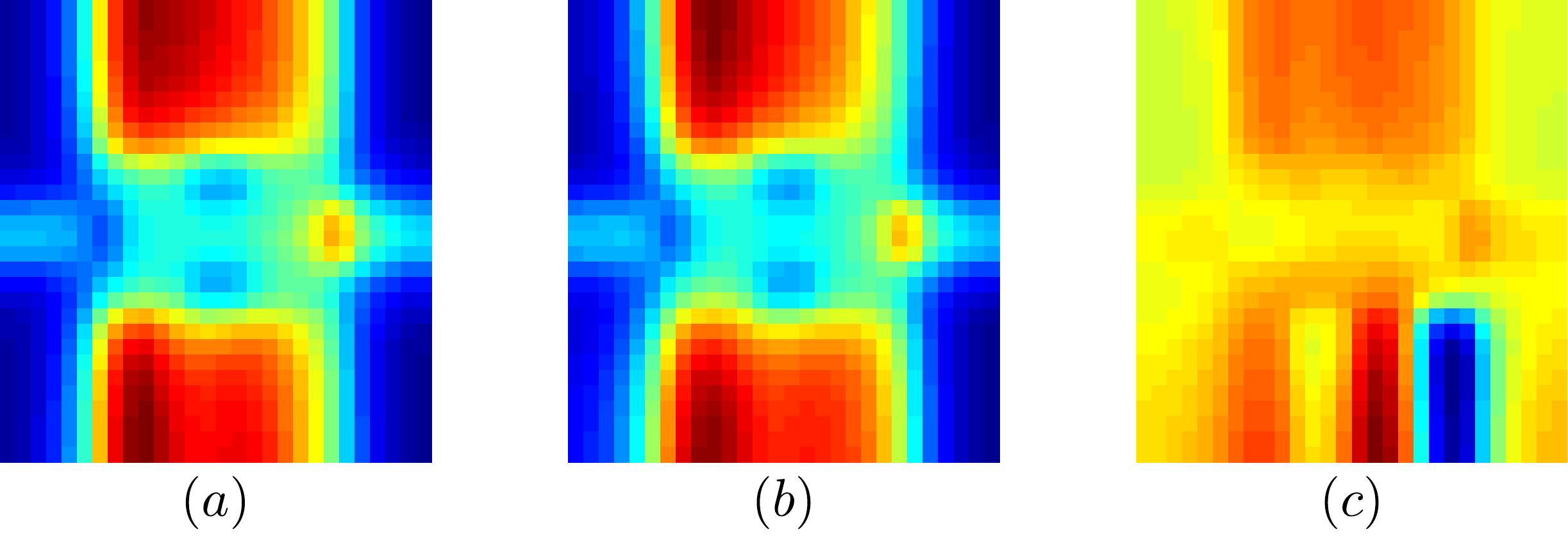}
\vspace{-0.3cm}
\caption{Examples of reconstruction of a thermal map of resolution
  $56\times 60$ using two different sensor placements. The real
  thermal map is depicted in $(a)$, while the reconstructed thermal
  maps using FrameSense and the algorithm proposed in
  \cite{Ranieri:2012fh} are given in $(b)$ and $(c)$,
  respectively. $L=16$ sensors are placed to measure $K=16$ parameters
  and the measurements are corrupted by noise with an SNR of 8.5
  dB. Note that the reconstruction using FrameSense is close to the
  real map, while the other one is extremely noisy. The details
  regarding the generation of the thermal maps are available in
  \cite{Ranieri:2012fh}.  }
  \label{fig:tmaps2}
\vspace{-0.6cm}
\end{figure}

We provided extensive numerical experiments showing that FrameSense
achieves the best performance in terms of $\operatorname{MSE}$ while
having the lowest computational complexity when compared to many other
greedy algorithms. FrameSense is also competitive performance-wise
with a state-of-the-art algorithm based on a convex relaxation
proposed in \cite{Joshi:2009el}, while having a substantially smaller
computational time.

We showed that FrameSense has appealing performance on a real-world
scenario, the reconstruction of thermal maps of many-core
processors. We showed a potential for reducing the number of sensors
required to estimate precisely the thermal distribution on a chip,
reducing the occupied area and the consumed power by the sensors.

Future work will be three-fold. First, it is foreseeable to relax the
RIP-like condition on $\mPsi$ by considering that the characteristics
of the $\operatorname{FP}$ are potentially sufficient to avoid
$\mPsi_\calL$ matrices with an unfavorable spectral
distribution. Moreover, it would be interesting to show that there
exists matrices, random or deterministic, that are
$(\delta,L)$-bounded frames. Second, we believe that a convex relaxed
scheme based on the $\operatorname{FP}$ integrating the heuristics
proposed by Joshi et al. \cite{Joshi:2009el} could improve
significantly the $\operatorname{MSE}$ of the obtained solution, while
keeping the near-optimality thanks to the $\operatorname{FP}$. Third,
we would like to derive a new cost function that exploits the sensing
energy of the rows, as described in Section \ref{sec:practical}.

\appendix

\subsection{Proof of Lemma \ref{lemma:MSE}}
\label{app:bound_MSE}
In this section, we bound the $\operatorname{MSE}$ of a matrix
$\mPsi_\mathcal{A}\in\R^{L\times K}$ as a
function of its $\operatorname{FP}$ and the spectrum
$\set{\lambda_i}_{i=1}^K$ of $\mT_\calL$.  

First, consider the harmonic mean $\operatorname{H}=\frac{K}{\sum_k
  \frac{1}{\lambda_j}}$, the arithmetic mean
$\operatorname{A}=\frac{\sum_k\lambda_k}{K}$ and the standard
deviation $\operatorname{S}=\sqrt{\frac{1}{K}\sum_k\left(\lambda_k -
    A\right)^2}$ of the eigenvalues of $\mT_\calL$. All these
quantities are linked to
$\operatorname{MSE}\left(\mPsi_\mathcal{A}\right)$,the number of
sensors $L$ and $\operatorname{FP}
\left(\mPsi_\mathcal{A}\right)$. More precisely, we have
\begin{align}
  &\operatorname{H}=\frac{K}{\operatorname{MSE}\left(\mPsi_\mathcal{A}\right)},\nonumber\\
  &\operatorname{A}=\frac{L_\mathcal{A}}{K},\nonumber\\
  &\operatorname{S}=\sqrt{\frac{1}{K}\left(\operatorname{FP}\left(\mPsi_\mathcal{A}\right)-\frac{{L_\mathcal{A}}^2}{K}\right)}.\nonumber
\end{align}
Then, we consider the following bounds for the harmonic mean of a set
of positive numbers derived by Sharma \cite{Sharma:2008ur},
\begin{align}
  \frac{(M-\operatorname{S})^2}{M(M-2\operatorname{S})}\le\frac{\operatorname{A}}{\operatorname{H}}\le
  \frac{(m+\operatorname{S})^2}{m(m+2\operatorname{S})},\nonumber
\end{align}
where $m$ and $M$ are the smallest and the largest number in the
set. We use the expressions of $\operatorname{A}$ and
$\operatorname{H}$ and we remove the mixed term in the denominator to
obtain,
\begin{align}
  \frac{K^2}{L_\mathcal{A}}\left(1+\frac{\operatorname{S}^2}{M^2}\right)\le
\operatorname{MSE}\left(\mPsi_\mathcal{A}\right)\le
  \frac{K^2}{L_\mathcal{A}}\left(1+\frac{\operatorname{S}^2}{m^2}\right).\nonumber
%\label{eq:bound_MSE}
\end{align}
As expected when the $\operatorname{FP}$ achieves its global minima, that is
$\operatorname{S}=0$, we achieve the optimal MSE of a tight frame.  

To conclude the proof, we consider the two bounds separately starting
from the lower one. Let $M=\lambda_1$ and we plug in the value of $S$.
We also consider without loss of generality $L_\mathcal{A}\le L_\text{MAX}$, since we can
always improve the $\operatorname{MSE}$ by increasing the sensing
power $L_\mathcal{A}$. Then,
\begin{align}
\operatorname{MSE}\left(\mPsi_\mathcal{L}\right)\ge\frac{K^2}{L_\text{MAX}}\left(1+\frac{\operatorname{FP}(\mPsi_\mathcal{A})}{K{\lambda_1}^2}-\frac{{L_\text{MAX}}^2}{K^2{\lambda_1}^2}\right). \nonumber
\end{align}
We obtain the final result by using lower bound on the
largest eigenvalue: $\lambda_1\ge \frac{L_\text{MAX}}{K}$. 

The approach to prove the upper bound is exactly
symmetrical. Specifically, consider $m=\lambda_N$, $L_\mathcal{A}\ge
L_\text{MIN}$ and use the upper bound on the smallest eigenvalue
$\lambda_N\le \frac{L_\text{MAX}}{K}$.

\section*{Acknowledgment}
The authors would like to thank Ivan Dokmani{\'c} and Prof. Ola Svensson
whose suggestions were fundamental to strengthen the results described
in the paper.  This research was supported by an ERC Advanced Grant --
Support for Frontier Research -- SPARSAM Nr: 247006.

\begin{IEEEbiography}[{\includegraphics[width=1in,height=1.25in,clip,keepaspectratio]{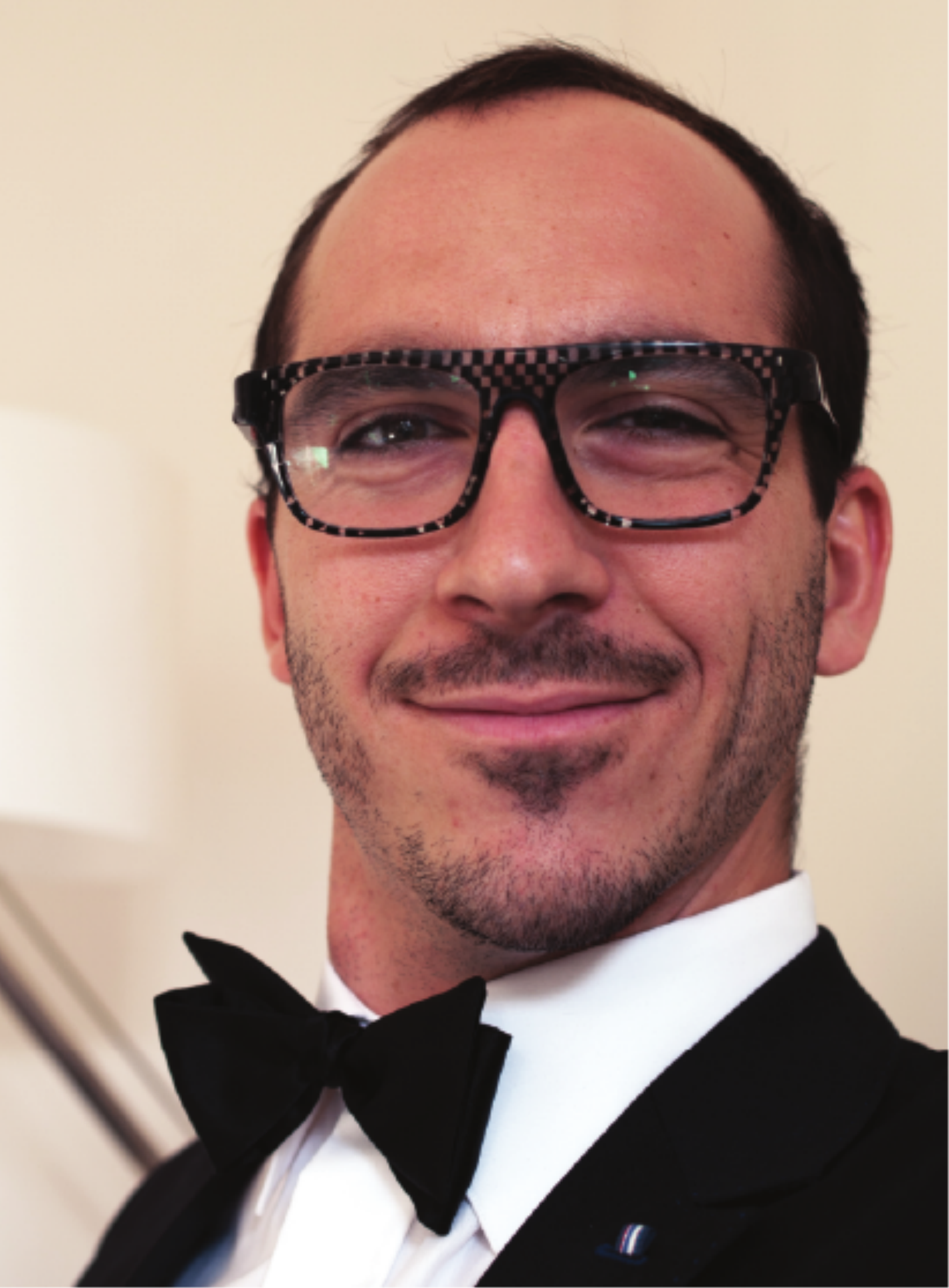}}]{Juri
    Ranieri} received both his M.S. and B.S. degree in Electronic
  Engineering in 2009 and 2007, respectively, from Universit{\'a} di
  Bologna, Italy. From July to December 2009, he joined as a visiting
  student the Audiovisual Communications Laboratory (LCAV) at EPF
  Lausanne, Switzerland. From January 2010 to August 2010, he was with
  IBM Zurich to investigate the lithographic process as a signal
  processing problem.  From September 2010, he is in the doctoral
  school at EPFL where he joined LCAV under the supervision of
  Prof. Martin Vetterli and Dr. Amina Chebira. From April 2013 to July
  2013, he was an intern at Lyric Labs of Analog Devices, Cambridge,
  USA. His main research interests are inverse problems of physical fields
  and the spectral factorization of autocorrelation functions.
\end{IEEEbiography}
\begin{IEEEbiography}[{\includegraphics[width=1in,height=1.25in,clip,keepaspectratio]{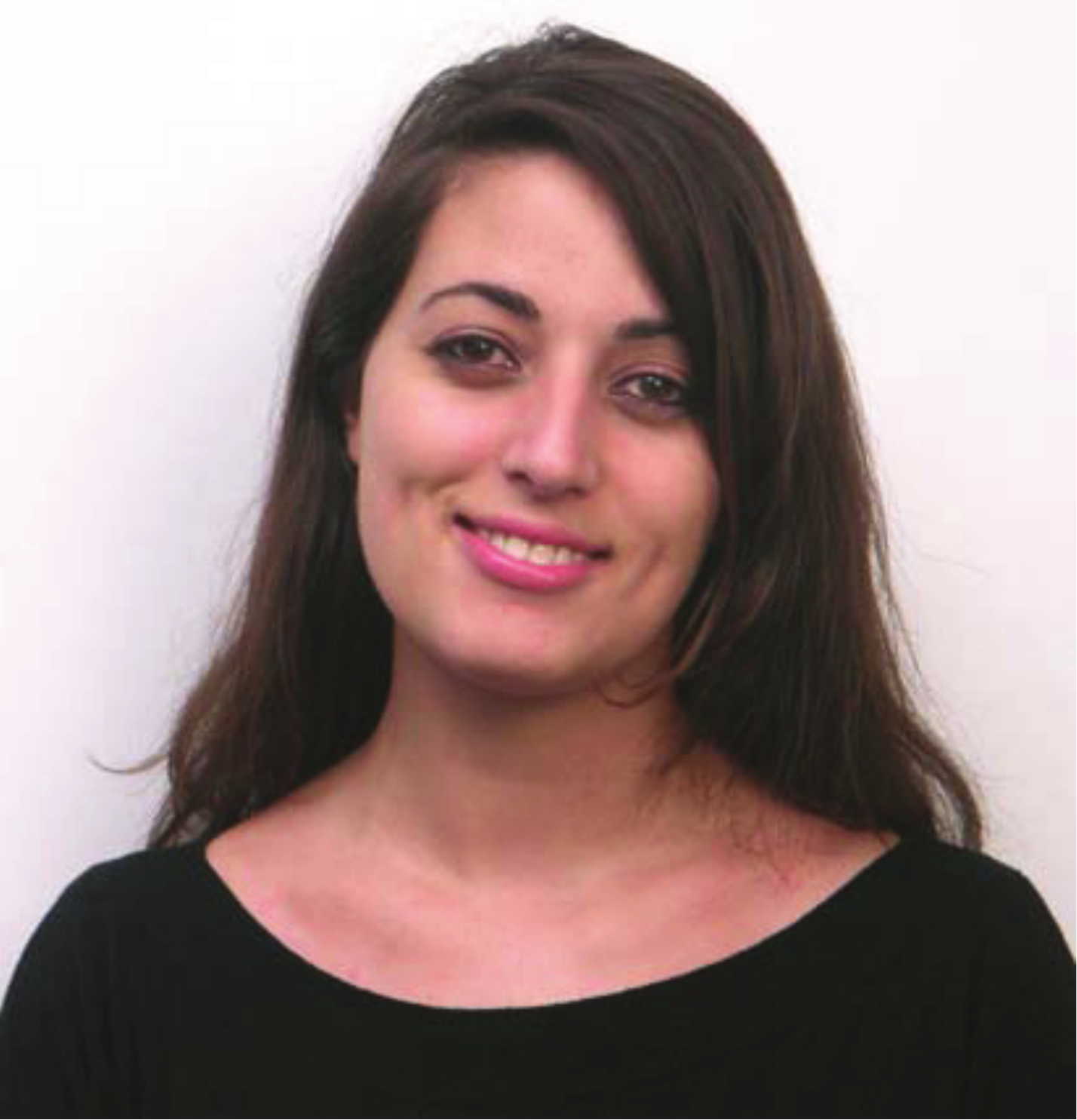}}]{Amina
    Chebira} is a senior research and development engineer at the
  Swiss Center for Electronics and Microtechnology (CSEM) in Neuch{\^
    a}tel, Switzerland. In 1998, she obtained a Bachelor degree in
  mathematics from University Paris 7 Denis Diderot. She received the
  B.S. and M.S. degrees in communication systems from the Ecole
  Polytechnique Fédérale de Lausanne (EPFL) in 2003 and the
  Ph.D. degree from the Biomedical Engineering Department, Carnegie
  Mellon University, Pittsburgh, PA, in 2008, for which she received
  the biomedical engineering research award. She then held a
  Postdoctoral Researcher position with the Audiovisual Communications
  Laboratory, EPFL, from 2008 to 2012. Her research interests include
  frame theory and design, biomedical signal and image processing, pattern
  recognition, filterbanks and multiresolution theory. 
\end{IEEEbiography}

\begin{IEEEbiography}[{\includegraphics[width=1in,height=1.25in,clip,keepaspectratio]{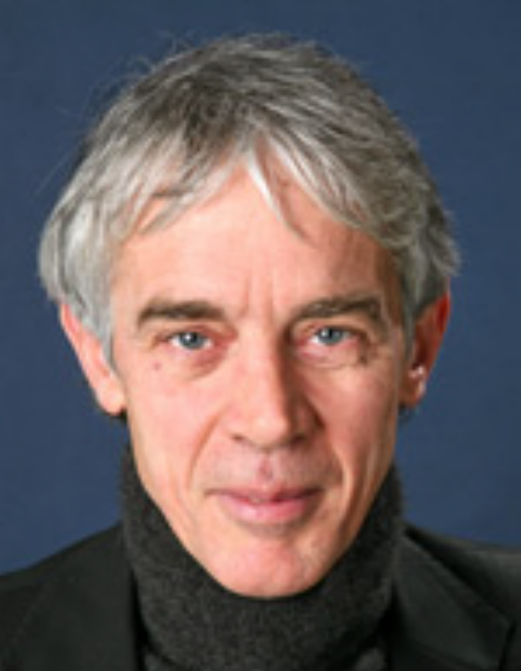}}]{Martin
    Vetterli} was born in 1957 and grew up near Neuchatel. He received
  the Dipl. El.-Ing. degree from Eidgenossische Technische Hochschule
  (ETHZ), Zurich, in 1981, the Master of Science degree from Stanford
  University in 1982, and the Doctorat es Sciences degree from the
  Ecole Polytechnique Federale, Lausanne, in 1986. After his
  dissertation, he was an Assistant and then Associate Professor in
  Electrical Engineering at Columbia University in New York, and in
  1993, he became an Associate and then Full Professor at the
  Department of Electrical Engineering and Computer Sciences at the
  University of California at Berkeley. In 1995, he joined the EPFL as
  a Full Professor. He held several positions at EPFL, including Chair
  of Communication Systems and founding director of the National
  Competence Center in Research on Mobile Information and
  Communication systems (NCCR-MICS). From 2004 to 2011 he was Vice
  President of EPFL and from March 2011 to December 2012, he was the
  Dean of the School of Computer and Communications Sciences. Since
  January 2013, he leads the Swiss National Science Foundation. He
  works in the areas of electrical engineering, computer sciences and
  applied mathematics. His work covers wavelet theory and
  applications, image and video compression, self-organized
  communications systems and sensor networks, as well as fast
  algorithms, and has led to about 150 journals papers. He is the
  co-author of three textbooks, with J. Kovacevic, ”Wavelets and
  Subband Coding” (PrenticeHall, 1995), with P. Prandoni, ”Signal
  Processing for Communications”, (CRC Press, 2008) and with
  J. Kovacevic and V. Goyal, of the forthcoming book ”Fourier and
  Wavelet Signal Processing” (2012). His research resulted also in
  about two dozen patents that led to technology transfers to
  high-tech companies and the creation of several start-ups. His work
  won him numerous prizes, like best paper awards from EURASIP in 1984
  and of the IEEE Signal Processing Society in 1991, 1996 and 2006,
  the Swiss National Latsis Prize in 1996, the SPIE Presidential award
  in 1999, the IEEE Signal Processing Technical Achievement Award in
  2001 and the IEEE Signal Processing Society Award in 2010. He is a
  Fellow of IEEE, of ACM and EURASIP, was a member of the Swiss
  Council on Science and Technology (2000-2004), and is an ISI highly
  cited researcher in engineering.
\end{IEEEbiography}

% insert where needed to balance the two columns on the last page with
% biographies %\newpage

\bibliographystyle{IEEEbib2}

\bibliography{biblio.bib}

\end{document}